
\documentclass[11pt]{article}
\usepackage[a4paper, margin=2.5cm]{geometry}

\bibliographystyle{alpha}
 
\usepackage{graphicx} 
\usepackage{hyperref}
\usepackage{booktabs}
\usepackage{placeins}
\usepackage{algorithm}
\usepackage{algpseudocode}
\usepackage{amsmath,amssymb, amsthm}
\usepackage{thm-restate}
\usepackage{amsfonts}
\usepackage{cleveref}
\usepackage{xcolor}
\usepackage[labelfont=bf]{caption}
\usepackage{subcaption}
 \usepackage{parskip}

 \newtheorem{theorem}{Theorem}[section]
 \newtheorem{lemma}[theorem]{Lemma}
  
 \newtheorem{fact}[theorem]{Fact}
 \newtheorem{corollary}[theorem]{Corollary}

\newcommand{\R}{\mathbb{R}}

\newcommand{\eps}{\varepsilon}

\newcommand{\cost}{\textsc{cost}}
\newcommand{\dist}{\text{dist}}
\newcommand{\opt}{\textsc{opt}}

\DeclareMathOperator*{\polylog}{polylog}
\DeclareMathOperator*{\poly}{poly}
\DeclareMathOperator*{\val}{Value_{MP}}
\DeclareMathOperator*{\isolated}{isolated}
\DeclareMathOperator*{\argmax}{arg\,max}

\newcommand{\lpar}{\left(}
\newcommand{\rpar}{\right)}
\newcommand{\lbra}{\left\{}
\newcommand{\rbra}{\right\}}

\DeclareMathOperator*{\nval}{Value}

\DeclareMathOperator*{\countVal}{Count}

\usepackage[colorinlistoftodos,prependcaption,textsize=tiny]{todonotes}
 \newcommand{\david}[1] {}
 \newcommand{\maxD}[1] {}

\title{Faster and Simpler Greedy Algorithm for $k$-Median and $k$-Means}
\author{Max Dupr\'e la Tour, David Saulpic}
\date{}

\begin{document}
 \pagenumbering{gobble} 
\maketitle

\begin{abstract}
Clustering problems such as $k$-means and $k$-median are staples of unsupervised learning, and many algorithmic techniques have been developed to tackle their numerous aspects. 

In this paper, we focus on the class of greedy approximation algorithm, that attracted less attention than local-search or primal-dual counterparts. In particular, we study the recursive greedy algorithm developed by Mettu and Plaxton [SIAM J. Comp 2003]. We provide a simplification of the algorithm, allowing for faster implementation: our algorithm matches the state-of-the-art running time for computing a constant-factor approximation in Euclidean space and graph metrics, and, in addition, is the first near-linear-time to compute a polylogarithmic approximation in Euclidean space.
\end{abstract}

\newpage 
\pagenumbering{arabic}  
\section{Introduction}
Clustering problems such as $k$-median and $k$-means lie at the intersection of applied and theoretical algorithms. Applied, because they are staples of unsupervised learning and are widely used for both classification and data analysis; and theoretical, because they have simple and elegant formulations, serving as testbeds for many algorithmic techniques.

In this paper, we focus on approximation algorithms. In general, we can distinguish three big families of techniques: those based on linear programming (e.g., primal-dual), local search, and greedy. For clustering, the primal dual method is the basis for the most accurate approximation algorithm, where a long line of work reached a $2+\eps$ approximation for $k$-median \cite{2approx} and  5.93 for $k$-means \cite{CharikarCGGLW25}. In addition, primal-dual based algorithms are quite flexible with respect to changes in the objective function — for example, they can be extended to handle the ordered $k$-median problem~\cite{ByrkaSS18}, clustering with outliers~\cite{KrishnaswamyLS18}, or give improved approximation for high-dimensional Euclidean space \cite{EuclideanKM}.
However, those have quite slow running time and are not expected to be applied. Local search is competitive for specific input: it yields an almost linear time approximation scheme in low-dimensional Euclidean space \cite{Cohen-Addad18}, or recover the optimal clustering under some stability assumption \cite{Cohen-AddadS17}. Local search also provides great approximation, as small as $2.836 + \eps$ for $k$-median \cite{Cohen-AddadGHOS22}. One drawback is, here as well, the running time, as evaluating the impact of a single local step takes in general linear time. We note here that, following the initial release of our work, Jiang, Jin, Lou and Lu~\cite{jiang2025localsearchclusteringalmostlinear} presented an implementation of local-search based on Locality Sensitive Hashing, to get an $O(c)$-approximation in time $n^{1+1/c+o(1)}$ in Euclidean space and graph metrics. We discuss this further below.

The third category, greedy algorithm, has been less investigated. The usual advantage of greedy algorithm  is their simplicity, that often highlights structural properties of the problem at hand and allow for fast implementation.
However, for clustering   the two most ``natural" greedy have strong lower bound. Iteratively adding the center that decreases most the cost does not give better than $\Omega(n)$ approximation, and its reversed version, which starts from all points being centers and removing the point that increases least the cost, is a $\Omega(\log n / \log \log n)$-approximation \cite{ChrobakKY06}.  On the other hand, the most famous practical algorithm for solving $k$-means is a randomized greedy, $k$-means++ \cite{ArV07}. Its approximation guarantee is however a super-constant $\Theta(\log k)$, and its running time $O(nkd)$ (in Euclidean space $\R^d$) becomes prohibitive for modern, very large scale data and values of $k$.  
The only greedy known to be a constant-factor approximation is a recursive greedy algorithm, due to Mettu and Plaxton \cite{MPOnlineMedian}. However, this algorithm has a strong reputation of being intricate. In addition, although it is among the first and few constant-factor approximation algorithm, it has not seen subsequent improvement, extension or application. 

In this paper, our goal is to simplify the recursive greedy algorithm of \cite{MPOnlineMedian},  in order to  provide new structural hindsight on clustering problems. We also hope that simplification will allow for more applications. And, indeed, we present several of those, including the first linear time $\polylog k$ approximation algorithm for $k$-means in $\R^d$.

\subsection{Our contribution}

The algorithm works not only for $k$-median and $k$-means, but for the more general $(k,z)$-clustering,  which seeks to minimize the sum of the $z$-th power of the distance from each client to its center ($k$-means is $z=2$, $k$-median $z=1$).

Our main result is a theorem which informally says that the algorithm of \cite{MPOnlineMedian} can be implemented fast, as long as two sub-procedures can be implemented efficiently. We reduce therefore the problem of computing efficiently a solution to $(k, z)$-clustering to those two sub-procedures, which we therefore describe as relaxed as possible -- in order to make the problem easier.
Those two procedures involve computing quantities related to balls in a metric space. One of the relaxation we allow is to consider approximate balls instead: a set $N(x,r)$ is a $c$-approximate ball centered at $x$ with radius $r$ if $B(x,r) \subseteq N(x,r) \subseteq B(x, c \cdot r)$.
We apply this both to Euclidean space, in which case a ball is defined as set of points at distance at most $r$ of $x$, but also to sparse graphs, in which case a ball is the set of edges for which at one endpoint is at distance less than $r$ of a given vertex $x$.

The first procedure's goal is to compute, approximately, the number of points contained in approximate balls centered at input points. The second procedure is a data-structure that allows, given $x$ and $r$, to compute a $c$-approximate ball centered at $x$ with radius $r$; and also to remove that approximate ball from the input set. More formally, our theorem is:

\begin{theorem}[see \Cref{thm:correctness} and \Cref{thm:runningtime}]\label{thm:main}
        Let $(P, \dist)$ be a metric space with aspect-ratio $\Delta$,\footnote{The aspect-ratio is the ratio between the largest distance and the smallest non-zero distance in the metric.} and $c > 5$ be a constant. 
    Suppose there is:
    \begin{itemize}
        \item an algorithm that computes the number of points in approximate balls centered on input point with radius $(2c)^i$  in time $T_{\nval}$, and
        \item a datastructure with preprocessing time $T_{Pre}$ and  that is able to: given $x$ and  $r$, compute a $c$- approximate ball $N(x,r)$ in the current ground set in time $T_N \cdot |N(x, r)|$; and remove an approximate ball $N(x, r)$ from the ground set in time $T_{rm} \cdot |N(x, r)|$.
    \end{itemize}
    
    Then the recursive greedy algorithm can be implemented such that it is a  $\poly(c)$- approximation and has running time $T_{\nval} + O\lpar (T_{Pre} + (T_{rm} + T_N)\cdot |P| ) \cdot \log \Delta\rpar$.
\end{theorem}

Furthermore, this algorithm not only computes a solution to $(k,z)$-clustering, but it also provides an ordering of the input points $p_1, \ldots, p_n$ such that for any $k$, the set $\{p_1, \ldots, p_k\}$ forms an $O(1)$-approximation to $(k,z)$-clustering. This variant of $(k,z)$-clustering is referred to as online \cite{MPOnlineMedian} or incremental \cite{SHENMAIER2016312, CHROBAK2011594}  in the literature. This is a shared feature with algorithms based on $k$-means++ seeding, where each prefix is a $O(\log k)$ approximation.

We apply this theorem in Euclidean space and sparse graph with different datastructure, and get the following corollaries:
\begin{corollary}\label{cor:applications}
    The recursive greedy algorithm can be implemented such that it computes, for any constant $c \geq 5$
    \begin{itemize}
        \item in graphs, a $\poly(c)$-approximation to incremental $(k,z)$-clustering in time $\tilde O\lpar m \log^2(n) \log \Delta\rpar$,
        \item in Euclidean space, a $\poly(c)$-approximation to incremental $(k,z)$-clustering in time $O\lpar n^{1+1/c + o(1)} d \log \Delta\rpar$,    \setlength{\emergencystretch}{1em}
        \item in Euclidean space, a $\polylog(k)$-approximation to  $(k,z)$-clustering in time $\tilde O(nd + n\log \log \Delta)$.
    \end{itemize}
\end{corollary}

We note that, in Euclidean space, the $(n-1, z)$-clustering problem is equivalent to the closest pair problem : as remarked in \cite{bhattacharya2025}, the current tradeoff is $\poly(c)$-approximation in time $n^{1+1/c}$, \cite{AndoniI07, AndoniR15} which is believed to be tight \cite{Rubinstein18}. In this case, our Euclidean constant-factor approximation would be tight too.

We also note that, for the standard $(k,z)$-clustering, it is easy to reduce the aspect-ratio $\Delta$ to $\poly(n)$: we show in appendix how to do so in near-linear time in graphs, and \cite{reduceDiam} show how to do it in time $O(nd + n \log \log \Delta)$ in Euclidean space. Hence, for the standard $k$-means and $k$-median in sparse graphs, the $\Delta$ in the bounds above can be replaced with $n$ -- and, for Euclidean space, this  combined with dimension reduction to replace $d$ with $O(\log k)$ \cite{MakarychevMR19}. The last result of the corollary is phrased to reflect those improvements.

\textbf{Comparison with prior work.}
We see our main contribution as a deeper understanding of the greedy algorithm from Mettu and Plaxton. This results in some novel tradeoff between approximation and running time for $(k,z)$-clustering, that we compare below with other existing techniques. 

We note first that our near-linear algorithm in graphs does not beat the state-of-the-art, in terms of approximation and running time: indeed, an algorithm from \cite{Thorup04} (which uses the primal-dual techniques as subroutine) runs in near-linear time and computes a constant-factor approximation to $k$-median. While the constant is quite small compared to ours ($12$), Thorup himself write that this algorithm "is rather complicated and hence unlikely to be of direct practical relevance". Our contribution here is to present a different algorithm, greedy and much simpler than the one of \cite{Thorup04}, which additionally works for the incremental version of the $(k,z)$-clustering as well.
On top of this, Thorup's result is proven only for $k$-median. While it is likely that the analysis would follow for $k$-means and $(k,z)$-clustering, using recent results on the primal-dual method for $k$-means, our result gives the first formal proof for a fast constant-factor approximation to $k$-means in graphs.

As any set of points in Euclidean space can be turned into a sparse graph in time $O\lpar n^{1+1/c + o(1)} d\rpar$ while preserving the pairwise distance up to a factor $O(c)$ (using spanners \cite{Har-PeledIS13}), a constant-factor approximation for Euclidean inputs running in time $O\lpar n^{1+1/c + o(1)} d\rpar$ follows from Thorup's result. Again, this formally holds for $k$-median, but not $k$-means.
Our contribution for Euclidean space is therefore similar to that for graphs: we present a simpler and natural greedy algorithm, that provably works for the general $(k,z)$-clustering problem.

In the realm of linear time, we are not aware of any sub-polynomial approximation algorithm for Euclidean $(k,z)$-clustering: only the algorithm from \cite{charikar2023simple} runs in near-linear time but yields a $O(k^4)$-approximation. 
The paper \cite{reduceDiam} claims that an algorithm from \cite{Cohen-AddadLNSS20} would produce a $O(\polylog n)$ approximation in near-linear time: however, their proof is not correct. More precisely, 
\cite{Cohen-AddadLNSS20} present a near linear-time algorithm with no approximation guarantee: to achieve a $\polylog(n)$-approximation ratio, they need a rejection-sampling step that adds a $n^{1 + o(1)}$ running-time -- as opposed to our near linear $n \polylog(n)$. The other algorithm indeed runs in near-linear time, but does not have any approximation guarantee: indeed, their multi-tree embedding is shown to preserve squared distances, in expectation, up to a polylog factor. However, this embedding does not yield a metric space (in particular, it does not respect triangle inequality\footnote{The multi-tree embedding consists of the following: build several quadtrees, and define the distance between two points to be the minimum distance accross all quadtrees. This minimum may not respect triangle inequality. As an illustration, consider three points on the line, $a$ at $x=0$, $b$ at $x= 2$ and $c$ at $x=4$. Consider two quadtrees, one with a split at $x=1$ (making all points with $x<1$ very far from all points with $x>1$, say distance $10$, while points with $x < 1$ or $x > 1$ are at their exact distance), the other one with a split at $x=3$. In both trees, $a$ and $c$ are at distance $10$, hence they are at distance 10 in the multi-tree. However, in the first one $b$ and $c$ are at distance $2$, and in the second $a$ and $b$ are at distance $2$: hence, with triangle inequality $a$ and $c$ should be at distance $4$.}): we are not aware of any approximation algorithm in this setting -- in particular, the $k$-means++ algorithm employed by \cite{Cohen-AddadLNSS20} does not.

Our near linear-time approximation algorithm is therefore new, and, importantly, corrects the claim made in \cite{reduceDiam}.

\textbf{Comparison with subsequent work}
Jiang, Jin, Lou and Lu~\cite{jiang2025localsearchclusteringalmostlinear} took inspiration from an initial pre-print of our paper to show how to implement fast local search, in spaces that admit sparse spanners: they get a $O(c)$-approximation for $(k,z)$-clustering in time $n^{1+1/c+o(1)}$. Hence, they improve our tradeoff between run-time and approximation ratio, and achieve the same result as combining spanners and Thorup's algorithm. Compared to the latter, their local search procedure has the advantage being much simpler, and their proof directly generalizes to $k$-means.
Our initial pre-print only mentioned results in Euclidean space, while \cite{jiang2025localsearchclusteringalmostlinear} applies to any metric with sparse spanners. We concurrently generalized our techniques to graphs -- hence, to any metric that admits efficiently computable sparse spanners.

We note that this paper explicitly cites our work as a motivation: we present a $\poly(c)$-approximation in time $n^{1+1/c+o(1)}$, and their goal was to bring the approximation-ratio to $O(c)$, closer to the "LSH tradeoff". 
In addition, we note that their algorithm fails to get any near-linear time run-time, as they deeply rely on LSH.

Another preprint appeared on arxiv \cite{arxiv} after the initial release of the present article. They claim a running time $\tilde O(nd)$: however, in their proof, they have a total running of $\Omega(k^4)$; hence, the claim of near-linear running time is only valid under the assumption $n \geq k^4$ (see their proof of Theorem 3.23).

\subsection{Further related work}

\textbf{Specific to Euclidean space.} The $k$-median and $k$-means problems are NP-hard even when the input is in the Euclidean plane $\mathbb{R}^2$ \cite{megiddo1984complexity, MahajanNV12}. However, in low-dimensional spaces, it is possible to compute a $(1+\eps)$-approximation, for any $\eps > 0$, in time $f(\eps, d) \tilde{O}(n)$ \cite{CohenAddadFS19}. If the target running-time is polynomial in the dimension $d$, the problems becomes NP-hard to approximate: within a factor of 1.015 for $k$-median and 1.06 for $k$-means \cite{Cohen-AddadSL22}. The best approximation ratios are $1+\sqrt{2}$ for $k$-median and 5.912 for $k$-means, based on a primal-dual algorithm running in large polynomial time \cite{Cohen-AddadEMN22}. 
For faster and practical algorithms, \cite{Cohen-AddadLNSS20} improves the running time of $k$-means++ to almost linear, while roughly preserving the approximation guarantee, and \cite{LattanziS19} that improves the approximation guarantee to $O(1)$ albeit with a running time of $O(nkd)$. For linear-time algorithm, the one from \cite{charikar2023simple} stays the best we know of.
Embedding the input metric into a quadtree yields an expected distortion on distances of $\poly(d) \log \Delta$: hence, combined with dimension-reduction to turn $d$ into $\log k$, with reduction of the diameter to $\poly(n)$, one can easily compute a $\polylog(k) \log n$-approximation to $k$-median. However, this does not work for $k$-means as the quadtree does not preserve squared distances (see e.g. \cite{CohenAddadFS19}).

Several sketching techniques are  applicable to clustering in Euclidean space: it is possible to reduce the dimension to $O(\eps^{-2} \log k)$ in near-linear time $\tilde{O}(nd)$, while preserving the cost of any clustering up to a multiplicative $(1 \pm \eps)$ factor. It is also possible to build coresets in time $\tilde{O}(ndk^{o(1)} + n \log \log \Delta)$, which reduces the number of distinct points to $O(k \eps^{-2-z})$ (see \cite{reduceDiam} for the specific running time,\footnote{As we noted previously, this paper uses polylogarithmic approximation in near-linear time, which was actually not known; we state their run-time when using an almost-linear time algorithm instead.} which uses the coreset algorithms from \cite{stoc21, Cohen-AddadLSS22}). Combining those techniques with e.g. \cite{LattanziS19}, it is possible to compute an $O(1)$-approximation to $(k,z)$-clustering in time $\tilde{O}(ndk^{o(1)} + k^2)$. 

\bigskip
\textbf{General metric spaces.}
Beyond Euclidean space, $k$-median is NP-hard to approximate within a factor of $1+2/e$ and $k$-means within $1+8/e$ \cite{GuK99}. 

For the incremental version of $k$-median, the best known approximation ratio is 7.656 for general metric spaces \cite{CHROBAK2011594} and 7.076 for Euclidean spaces \cite{SHENMAIER2016312}. The approximation ratio cannot be better than 2.01 \cite{CHROBAK2011594}.

Last, Mettu and Plaxton  \cite{MPOnlineMedian} introduced an algorithm related to recursive greedy, for the Facility Location problem. In this algorithm as well they weight balls with a value: \cite{FacilityLocationSubLinear} showed how to use a value proportional to the number of points in the ball, to estimate the optimal cost of Facility Location problem in a streaming setting.

\section{Preliminaries}

The $(k,z)$-clustering problem is defined as follows: the input is a metric space $(P,\dist)$ with $|P| = n$, an integer $k$, and a $z \geq 1$. The goal is to find a set of $k$ points $S\subseteq P$ that minimizes $\cost(P, S) := \sum_{x \in P} \dist(x, S)^z,$ where $\dist(x, S) := \min_{s \in S} \dist(x, s)$. We say that a set of $k$ points $C_k$ is an $\alpha$-approximation to $(k,z)$-clustering when $\cost(P, C_k) \leq \alpha \cdot \min_{S, |S|=k} \cost(P, S)$.

A list of $n$ points $c_1, ..., c_n$ is an $\alpha$-approximation to the incremental $(k,z)$-clustering problem on input $P$ when for any $k = 1, ..., n$, the prefix $c_1, ..., c_k$ is an $\alpha$-approximation to $(k,z)$-clustering on $P$.

We assume without loss of generality that the smallest pairwise distance between points of $P$ is $1$, and we let $\Delta$ be an upper bound on the diameter of the input $P$ (i.e., the largest pairwise distance).

We will consider different metric spaces. The first type is metric spaces induced by a positively weighted connected graph. In this setting, $P$ is a subset of the vertices of a graph weighted $G = (V, E, w)$, where $w \colon E \to \mathbb{N} \setminus \{0\}$, $|V| = n$, and $|E| = m$. The distance $\dist(u, v)$ between two vertices $u,v \in V$ is defined as the length of the shortest (weighted) path in $G$.
The second metric is when $P$ is a multiset of points in $\mathbb{R}^d$. In this case, $\dist$ is the usual Euclidean distance.

In our algorithm, we will often use a subroutine to compute the size of a union of sets.

\begin{lemma}\label{lem:sizeUnion}
    Given a set of items $P$, a collections $S_1, ..., S_m$ of subset of $P$, and a collection of queries $Q_1, ..., Q_t \subseteq \{1, ..., m\}$, there is an algorithm with running-time $O\lpar (\sum |S_i| + \sum |Q_i|)\cdot \log t\rpar$ that is able to compute, with probability $1-1/t^2$, an estimate for any $i$ of $\left|\cup_{j \in Q_i} S_j\right|$ correct up to a factor $3$.
\end{lemma}
\begin{proof}
We rely on the sketching technique introduced by \cite{flajolet1985probabilistic, alon1996space}. They show that there is a function $r : \mathbb{R}^d \rightarrow \mathbb{R}$ such that, for any fixed set $U$, $|U|$ is well approximated by $2^{Y_U} := 2^{\max_{u \in U} r(u)}$. Formally, with probability $2/3$, it holds that $\frac{1}{3} \leq \frac{|U|}{2^{Y_U}} \leq 3$. (See Proposition 2.3 in \cite{alon1996space}). The running time to compute the function $r$ is the time to evaluate a pairwise independent hash function, e.g. $O(1)$.

Our algorithm therefore computes, for each $S_j$, the value $S_j := \max_{p \in S_j} r(p)$ in times $O(\sum |S_j|)$. For any $Q_i$, it  holds that $Y_{Q_i} := \max_{j \in Q_i} Y_j$ satisfies with probability $2/3$ that
\begin{align*}
\frac{1}{3} \leq \frac{\left|\cup_{j \in Q_i} S_j\right| }{2^{Y_{Q_i}}} 
\leq 3
\end{align*}

Computing each $Y_{Q_i}$ takes time $O(\sum |Q_i|)$. Therefore, it only remains to boost the probability to ensure the guarantee holds for all $Q_i$ simultaneously: for this, we run  $3\log(t)$ many copies of the algorithm and let $\countVal(Q_i)$ be the median of those estimates. A standard argument shows that, with probability $1 - 1/t^2$, it holds for all $Q_i$ that $\frac{1}{3} \leq \frac{\left|\cup_{j \in Q_i} S_j\right|}{\countVal(Q_i)} \leq 3$, which implies the lemma.

The overall running time is $O\lpar (\sum |S_j| + \sum |Q_i|) \log(t)\rpar$.
\end{proof}

\section{The Greedy Algorithm}

\subsection{Description of the original algorithm}
In what follows, we assume that $z \geq 1$ is fixed. We start by presenting the  original algorithm and definitions from  Mettu and Plaxton~\cite{MPOnlineMedian}. 
\begin{itemize}
    \item Given a ball $B = B(x,r):= \{y\in P, \dist(x,y) \leq r\}$, the \emph{value} of $B$ is $\val(B):= \sum_{y\in B} (r-\dist(x,y))^z$.
    \item A \emph{child} of a ball $B(x,r)$ is any ball $B(y,r/2)$, where $y\in P$ and $\dist(x,y) \leq 10 r$.
    \item For any point $x\in P$ and a set of centers $C$, let $\isolated(x,C)$ denote the ball $B(x,\dist(x,C)/100)$ if $C$ is not empty; and $B(x,\max_{y\in P}d(x,y))$ if $C = \emptyset$. Intuitively, this corresponds to very large ball centered at $x$ that is far away from any center of $C$.\footnote{In those definitions, we chose the scalar constants $2, 10, 100$ for convenience: the whole analysis can be parameterized more carefully in order to optimize the approximation ratio. We opted for simplicity.}
\end{itemize}

The algorithm is a recursive greedy procedure, that starts with $C = \emptyset$ and repeats $n$ times the following steps:  start with the ball $\isolated(x, C)$ with maximum value over all $x \in P$ (with ties broken arbitrarily), and as long as this ball has more than one child (i.e. as long that there are at least two distinct points of the input $P$ "close" to the ball) replace it with the child with maximum value. Let $x$ be the center of the last chosen ball: add $x$ to $C$, and repeat -- see \Cref{alg:mp1} for a pseudo-code.

\begin{algorithm}
\caption{Recursive Greedy}
\label{alg:mp1}
\begin{algorithmic}[1]
\State{Let $C_0 = \emptyset$}
\For{$i$ from $1$ to $n$}
\State{Let $B$ be a maximum value ball in $\{\isolated(x,C_i)| x\in P\setminus C_i\}$}
\While{$B$ contains more than one point}
\State{Replace $B$ by a maximum value child of $B$.}
\EndWhile
\State{$C_{i} = C_{i-1} \cup \{c_{i}\}$, where $c_{i}$ is the center of $B$.}
\EndFor
\end{algorithmic}
\end{algorithm}

\begin{theorem}[\cite{MPOnlineMedian}]\label{thm:mp1}
    For any fixed $z$ and for all $k$, the cost of $C_k$ is a $O(1)$-approximation of the optimal $(k,z)$-clustering cost. 
\end{theorem}

Mettu and Plaxton prove that this algorithm can be implemented in $ O(n^2) $ time, which is linear with respect to the input size (when the metric space is given as a full matrix of pairwise distances). We modify this algorithm to achieve a fast implementation when the metric is described as a sparse graph or a Euclidean space. 
The general idea is that the recursive greedy algorithm still achieves an $O(1)$-approximation even all quantities are approximated: the value function, child sets, and the sets of isolated balls need not be exactly computed. 

\subsection{Simplification and extension}

The main source of conceptual difficulty of \cref{alg:mp1} is the notion of value, which is not easy to grasp intuitively: the algorithm recurses down to denser region, but the notion of denser it uses is not the most natural. 
Our first contribution is to show that the value can be replaced essentially by the number of points inside the ball, times the radius of the ball. With this definition of value, the interpretation of the algorithm is more straightforward: the algorithm recurses down to the children containing most points.

In order to simplify the application of the algorithm, we show in addition that the number of points can be approximated, and that the children of a ball can be computed approximately as well.  We introduce a parameter $c \geq 5$, which governs the trade-off between run-time and approximation ratio.

\textbf{Simplifying the value function.} Instead of $\val(B(x, r)) = \sum_{y \in B(x, r)} (r - \dist(x, y))^z$, we will use a function $ \nval$ approximating $r^z \cdot |B(x, r)|$ as follows: first, count the number of points inside the ball $B(x, r)$ up to a factor $3$, and then multiply by $r^z$. Formally, we show that it is enough to use a function $\nval$ that satisfies
\[\forall x\in P, ~r^z/3 \cdot |B(x, r)| \leq \nval(B(x, r)) \leq 3r^z \cdot |B(x, c\cdot r)|.\]
This is our key new insight, that allows for fast implementation: it is indeed much easier to approximately count the number of points in a ball than to evaluate $\val$.

\textbf{Approximating balls.}
In addition to this key simplification, we allow for some approximations in the computation of the different ball considered by the algorithm.

\textit{Allowing approximation of balls to redefine children.} 
We say that $N(x, r)$ is a $c$-approximate ball of $x$ at radius $r$ if it satisfies $B(x,r) \subseteq N(x,r) \subseteq B(x, c\cdot r)$. When $c$ is fixed, we simply say that $N(x,r)$ is an approximate ball. 
The algorithm uses this notion instead of the "child" used in \Cref{alg:mp1}: it considers all balls (of radius $r/2c$) that are centered at points that are in an approximate ball of $x$ (of radius $r$). 

\textit{Forbidding balls.} 
To select the starting ball at the beginning of an iteration, we move away from the notion of $\isolated$. 
Instead, our algorithm maintain a set of available balls. Initially, those are all balls; when center $c_j$ is placed at the end of $j$-th iteration, our algorithm \textit{forbids} (i.e., remove them from the set of available balls) all balls that are too close to $c_j$. More precisely, the algorithm computes, for all $r$ powers of $2c$ such that $1 \leq r \leq \Delta$, an approximate ball $N(c_j, 100c^4 \cdot r)$. It removes from the set of available balls all balls of the form $B(p, r)$ with $p \in N(c_j, 100c^4 \cdot r)$. There are $O(n \log \Delta)$ such balls.

\textit{Reducing the number of balls.}
Perhaps not surprisingly, the radius of balls can be rounded, in order to limit the number of distinct balls to consider.
For this, we assume for simplicity that the diameter $\Delta$ is a power of $2c$. The algorithm will consider only balls of the form $B(x,r)$, where $x \in P$ is an input point and $r$ is a power of $2c$, such that\footnote{Although points are at distance at least $1$, it is important for our algorithm to consider balls with smaller radius $1/(2c)^7$ in order to be sure that, around any point, there is a ball available unless there is a center at the point.} $1/(2c)^7 \leq r \leq \Delta$. 

We give the pseudocode of our modified algorithm in \cref{alg:main}. 

\begin{algorithm}\caption{Modified recursive greedy} \label{alg:main}
\textbf{Input:} A metric space $(P,\dist)$ and a number of clusters $k$.\\
\textbf{Output:} a set of $C$ of at most $k$ centers.
\begin{algorithmic}[1]
\State{Define the set of \textit{available} balls to be $\lbra B(x, \Delta/(2c)^\ell), x \in P, l\in \{0, \dots,\log_{2c}(\Delta)+7\} \rbra$.}
\State{Compute $\nval(B(x,r))$ for all the available balls.}
\For{$i$ from $1$ to $k$}
\If{The set of available ball is empty\footnotemark}
\State{return the solution $C_{i-1} = \{c_1,\dots,c_{i-1}\}$.}
\EndIf 
\State{Let $B = B(x,r) $ be an available ball with largest $\nval$}
\While{$r > 1/(2c)^7$}
\Comment{\textit{Center Selection loop}}
\State{Compute an approximate ball $N = N(x, 10c \cdot r)$}
\State{Update $x$ and $r$: select $x \in \argmax_{y \in N} \nval\lpar B\lpar y, \frac{r}{2c}\rpar\rpar$ and $r \gets \frac{r}{2c}$}
\EndWhile
\State{$c_i \gets x$}
\For{all radius $r \in \lbra \Delta/(2c)^\ell, \ell \in \{0,..., \log_{2c}(\Delta)+ 7\}\rbra$}
\Comment{\textit{Forbidding loop}}
\For{$x \in N\lpar c_i,100c^4\cdot r\rpar$}
\State{Remove $B(x,r)$ from the set of available balls.}
\EndFor
\EndFor
\EndFor
\State{Output the solution $C_k = \{c_1,\dots,c_k\}$. }
\end{algorithmic}
\end{algorithm}
\footnotetext{In that case, as explained in the footnote 2, $i$ is more than the number of distinct points.}

\subsection{Analysis}
To clarify the analysis, we stop the algorithm after $k$ iterations and prove that the set of centers $C_k$ is a $\poly(c)$-approximation of the optimal $(k,z)$-clustering. However, the algorithm does not depend on $k$, and therefore the set of centers after $k'$ iterations for $k' \leq k$ is also a $\poly(c)$-approximation of the optimal $(k',z)$-clustering. In particular, if we modify the algorithm to stop after $n$ iterations instead, it provides a $\poly(c)$-approximation of the \emph{incremental} $(k,z)$-clustering problem. 

For clarity, we split the proof of \Cref{thm:main} into two parts: first the approximation guarantee, then the running time.

\begin{theorem}\label{thm:correctness}
    For any $k$, the set of centers output by the simplified recursive greedy \Cref{alg:main} gives a $\poly(c)$-approximation of the optimal $(k,z)$-clustering solution.
\end{theorem}

We provide the full proof of this theorem in the appendix~\ref{appendix:correctness}, and focus here only on the running time. 

In our applications, computing the values of ball turns out to be easy -- since it boils down to estimating the number of points in balls, which is a commonly studied task. Hence, the main remaining task when applying this algorithm to a specific metric space is to bound the running time of the Forbidding loop and of the Center Selection loop.

\begin{theorem}\label{thm:runningtime}
    Let $(P, \dist)$ be a metric space. 
    Suppose there is an algorithm that computes all values in time $T_{\nval}$ and a datastructure with preprocessing time $T_{Pre}$ and  that is able to, for any fixed $r$: for any given $x$, remove all points of a $c$-approximate ball $N(x,r)$ from the ground set in amortized time $T_{rm} \cdot |N(x,r)|$, and compute a $c$-approximate ball $N(x, r)$ of the current ground set in time $T_N \cdot |N(x,r)|$.
    
    Then the running time of \Cref{alg:main} is bounded by 
    \begin{equation*}
        T_{\nval} + O\lpar \log \Delta ( T_{Pre} + |P|(T_N + T_{rm}) )  \rpar
    \end{equation*}
\end{theorem}

To prove this theorem, it suffices to bound the complexity of the \textbf{Center Selection} loop and the \textbf{Forbidding} loop, since by definition the value of all balls can be computed in time $T_{\nval}$.

In the \textbf{Center Selection} loop, we will show that each ball is considered at most once when the algorithm selects the ball with the maximum value at line 10.

In the \textbf{Forbidding} loop, we use a data structure that maintains $O(\log n)$ copies of the input--one for each radius considered by the algorithm. After a center is selected, for each radius $r$, we forbid all balls $B(x, r)$ with $x \in N(c_i, 100c^4 \cdot r)$. These points are deleted from the corresponding data structure to prevent them from being reconsidered later in the Forbidding loop when subsequent centers are selected.

\subsection{Running time analysis of the Center Selection loop}

We now introduce a new definition. We say that a ball $B'$ is a \emph{potential descendant} of a ball $B$ if there exists a sequence of balls $B_0, \dots, B_{\ell}$, with $B_i = B(x_i, r_i)$, such that $B_0 = B$, $B_{\ell} = B'$, and for all $i$, $\dist(x_i, x_{i+1}) \leq 10c^2 r_i$ and $r_{i+1} = r_i / 2c$. 

Note that if two balls $B$ and $B'$ appear in the same center selection loop with $B'$ appearing after $B$, then $B'$ is a potential descendant of $B$.

\begin{fact}
\label{lem:distDesc}
    If $B(y, r_y)$ is a potential descendant of $B(x, r_x)$, then $\dist(x,y) \leq 20c^2 \cdot r_x$.
\end{fact}
\begin{proof}
    Let $B(y, r_y)$ be a descendant of $B(x, r_x)$, and let $x_0 = x, ..., x_\ell = y$ be the centers of the sequence of balls $(B_0, \dots, B_{\ell})$ from the definition of descendant. 
    For every $i$, we have $\dist(x_i, x_{i+1}) \leq \frac{10c^2\cdot r_x}{(2c)^i}$. By triangle inequality, this implies 
    \begin{align*}
        \dist(x,y) &\leq \sum_{i = 0}^{\ell-1} \frac{10c^2\cdot r_x}{(2c)^i} \leq 20 c^2 \cdot r_x.
    \end{align*}
\end{proof}

A consequence of this fact is that any ball appears at most once in an approximate ball of the Center Selection loop:
\begin{lemma}\label{lem:onlyOnce}
    For any $x \in P, r \in \R^+$, the ball $B(x, r)$ appears at most once in the center selection loop.
\end{lemma}
\begin{proof}
We split the proof into two parts: first, if a ball appears in an approximate ball, then it is forbidden at the next Forbidding step. Second, if a ball is available, then all its potential descendants are available. Combined, those two results conclude our lemma.

\begin{fact}\label{cor:childForbidden}
    If a ball $B(x, r_x)$ appears line $9$ as a candidate child in the approximate ball $N(y,10c \cdot r_y)$ computed from a ball $B(y,r_y)$ during the $j$-th iteration of the center selection loop, then $B(x, r_x)$ is forbidden during the forbidding procedure after $c_j$ is selected.
\end{fact}
\begin{proof}
The radius of $B(y,r_y)$ is $r_y = 2c \cdot r_x$. \Cref{lem:distDesc} ensures that  $c_j$ is at distance at most $20c^2 \cdot 2c \cdot r_x$ from $y$. Moreover $\dist(y,x) \leq 10c^2 \cdot 2cr_x = 20c^3r_x$. Therefore, $\dist(x, c_j) \leq 60c^3 r_x < 100 c^4 r_x$, hence the ball $B(x, r_x)$ is in the set $N(c_j, 100c^4 r_x)$ and is forbidden on line 12 when $c_j$ is selected as a center.
\end{proof}

\begin{fact}\label{lem:childAvailable}
    If, at the beginning of an iteration of the loop line 4, a ball $B(x, r_x)$ is available, then all its potential descendants are available.
\end{fact}
\begin{proof}
   Let $B(x, r_x)$ be any ball, and let $B(y, r_y)$ be a potential descendant of $B(x, r_x)$. 
   Suppose that $B(y, r_y)$ is not available at the beginning of an iteration of the loop in line 4. Then there exists a center $c_i$ selected by the algorithm such that $y \in N(c_i, 100c^4 \cdot r_y)$ and therefore $\dist(y, c_i) \leq 100c^5 \cdot r_y$. Because $B(y, r_y)$ is a descendant of $B(x, r_x)$, we know by \Cref{lem:distDesc} that $\dist(x, y) \leq 20c^2 \cdot r_x$. Using the triangle inequality, we get $\dist(x, c_i) \leq \dist(x, y) + \dist(y, c_i) \leq 20c^2 \cdot r_x + 100c^5 \cdot r_y$. We also know that $r_y \leq r_x / 2c$ and therefore $\dist(x, c_i) \leq (50c^4 + 20c^2) \cdot r_x \leq 100c^4 \cdot r_x$. Hence $x \in B(c_i, 100c^4 \cdot r_x) \subset N(c_i, 100c^4 \cdot r_x)$, and $B(x, r_x)$ is also forbidden at the $i$-th iteration.
\end{proof}
Combining those two facts concludes the proof.
\end{proof}

\begin{proof}[Proof of \Cref{thm:runningtime}]
    First, by definition, the running time of computing all values in line 2 of the algorithm takes time $T_{\nval}$.  

    For analyzing the Forbidding loop, we note the following: instead of computing $N(c_i, 100c^4 r)$ in line 14, it is merely enough to compute the set of balls that have not been forbidden yet in that set. To do so with the datastructure from the theorem statement, one can merely do the following: for each valid $r$, initialize the datastructure with all points in the ground set. 
    Then, after each computation of an approximate ball $N(c_i,100c^4 r)$ of the current set, remove all $x \in N(c_i,100c^4 r)$ from the ground set of the corresponding $r$ in line 15.
    Hence, the preprocessing time is $T_{Pre} \cdot O(\log \Delta)$, and the total running time is $O((T_N + T_{rm})|P| \log \Delta)$. Indeed, there are $O(\log \Delta)$ different radius $r$, and at each level any point appears at most once before being removed. The guarantees of the datastructure therefore ensure that, at any given level, the total running time is bounded by $O((T_N + T_{rm})|P|)$.

    Last, it remains to analyze the Center Selection loop. For this, we can use the datastructure from the lemma, without removing any point: \Cref{lem:onlyOnce} ensures that each ball $B(x, r)$ appears only once. Hence, the total running time is again $O\lpar (T_{Pre} +  T_N |P|) \log \Delta) \rpar$.
\end{proof}

\section{Implementation in the Euclidean setting}

\subsection{Near-linear time approximation via multiple quadtrees}

Tree embeddings are a common tool for designing approximation algorithms on metric spaces. A particularly useful tree embedding in Euclidean spaces is the \emph{quadtree}; see, for example, \cite{har2011geometric, Arora98} for general reference, and \cite{KolliopoulosR07, Cohen-AddadLNSS20} for applications to clustering.

In a quadtree, the input space is enclosed in an axis-aligned hypercube that is recursively subdivided into $2^{d}$ smaller hypercubes of half the side length. 

If the entire input is translated by a uniformly random vector in $[0, \Delta]^{d}$, then the standard analysis of distortion induced by the quadtree (see e.g. \cite{reduceDiam}) shows that with probability at least $1/2$, the smallest quadtree cell containing both $p$ and $q$ has side length at most $4 \sqrt{d} \cdot \|p - q\|$. We define $\dist_T(p, q)$ as the diagonal of the smallest quadtree cell containing both $p$ and $q$.

To boost the probability, one can take $t = O(\log n)$ independent random shifts and construct $t$ quadtrees, with the following guarantees:

\begin{itemize}
    \item for all quadtree $T$, and any $p,q$, $\|p-q\| \leq \dist_T(p, q)$
    \item with probability $1-1/n^2$, for any $p, q$ there exist a quadtree $T$ such that $\dist_T(p, q) \leq 4 d \|p-q\|$.
\end{itemize}
The construction of a single quadtree takes time $O(nd \log \Delta)$ \cite{Cohen-AddadLNSS20}, hence the total construction time is $O(nd \log n \log \Delta)$. 

We will use these quadtrees to run Algorithm~\ref{alg:main} with parameter $c = 4d$. In order to evaluate the value of all balls of radius $r$, we compute all quadtree cells of side length $L = 4 \sqrt d \cdot r$. Now, to compute an approximate ball $N(x,r)$, it is merely enough to take the union of all 
cells with side length $L$ that contain $x$. Indeed, with probability $1$ all points in those cells are at distance at most $L\sqrt{d}$ of $x$; and, with probability $1-1/n^2$, any point at distance $r$ is in the same cell as $x$ in one of the quadtrees.

Therefore, computing the set $N(x,r)$ can be done in time $O(\log n) |N(x,r)|$, hence $T_N = O(\log n)$ -- as there are $O(\log n)$ different quadtrees, the union can be computed with this running time. Removing a point from the datastructure simply takes time $T_{rm} = O(\log n)$, to remove it from each of the $O(\log n)$ quadtree.

In addition, computing the size of $N(x,r)$ is a direct application of \Cref{lem:sizeUnion} -- and computing $N(x, r)$ for all point $x \in P$ and $r$ power of $2$ takes time $T_{\nval} = O(n \log \Delta \log^2 n)$. 

Theorem~\ref{thm:runningtime} directly implies:

\begin{corollary}
In Euclidean space, the recursive greedy algorithm can be implemented to provide a $\poly(d)$-approximation to incremental $(k,z)$-clustering in time $O(n d \log n \log \Delta + n \log \Delta \log^2 n)$.
\end{corollary}

Note that, using dimension-reduction, the dimension for $(k,z)$-clustering can be reduced to $O(\log k)$ \cite{MakarychevMR19} in time $O(nd \log d)$. This, with the above corollary, shows the last two items of \cref{cor:applications}.

\subsection{Constant-factor approximation via Locality-sensitive hashing:}
The tool we use in the Euclidean setting is Locality-sensitive hashing \cite{AndoniI07}. The precise result we use is the following:
\begin{lemma}[See section D in \cite{Cohen-AddadLNSS20}]\label{lem:lsh}
Let $P \subseteq \R^d$, $r \in \R^+$, and $\ell = (n/\delta)^{1/c^2}$ ; there is a family of hash functions from $\R^d$ to some universe $U$ such that, with probability $1-\delta$, 
if $f_1, ..., f_\ell$ are drawn from this family:
\begin{itemize}
\item For any $p,q \in P$ with $\dist(p,q) \geq c\cdot r$, then for all $i = 1, ..., \ell$ $f_i(p) \neq f_i(q)$
\item For any $p,q \in P$ with $\dist(p,q) \leq r$, then there exists $i \in \{1, ..., \ell\}$ with $f_i(p) = f_i(q)$.
\end{itemize}
Furthermore, the hash functions satisfy the following:
\begin{itemize}
\item for any $i, p \in \R^d$, computing $f_i(p)$ takes time $O\lpar dn^{o(1)}\rpar$,
\item after preprocessing time $O\lpar \ell d \cdot n^{1+o(1)} \rpar$, one can compute for any $i, p$ the set $T_i[u] := \{p: f_i(p) = u\}$ in time $O(|T_i[u]|)$.
\end{itemize}
\end{lemma}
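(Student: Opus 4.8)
\medskip
\noindent\emph{Proof proposal.} The plan is to build the family by the textbook AND/OR amplification on top of a data-independent Euclidean LSH, the only twist being that we push the ``far'' side all the way to a high-probability exact promise rather than the usual ``$O(n^{\rho})$ spurious collisions to be filtered afterwards''. First I would invoke the Andoni--Indyk construction: for every $R>0$ and $c>1$ there is a distribution over hash maps $h\colon\R^d\to U$ with $\Pr[h(p)=h(q)]\ge p_1$ when $\dist(p,q)\le R$ and $\Pr[h(p)=h(q)]\le p_2$ when $\dist(p,q)\ge cR$, where $\rho:=\ln(1/p_1)/\ln(1/p_2)\le 1/c^2+o(1)$, where $p_2$ stays bounded away from $1$, and where a single evaluation costs $O(d\cdot n^{o(1)})$ (the $n^{o(1)}$ coming from the lattice-decoding step after the construction's internal dimension reduction).

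Next I would set each $f_i=(h_{i,1},\dots,h_{i,t})$, $i=1,\dots,\ell$, to be a concatenation of $t$ i.i.d.\ copies of $h$, so that $f_i(p)=f_i(q)$ exactly when all $t$ coordinates agree; the collision probability under a fixed $f_i$ is then at least $p_1^{t}$ for a near pair and at most $p_2^{t}$ for a far pair. I would pick $\ell=(n/\delta)^{1/c^2}$ up to a $\polylog(n/\delta)=(n/\delta)^{o(1)}$ factor, and then pick $t$ as large as possible subject to $\ell\cdot p_1^{t}\ge\log(n^2/\delta)$. With this $t$, a fixed near pair fails to collide in all $\ell$ copies with probability at most $(1-p_1^{t})^{\ell}\le e^{-\ell p_1^{t}}\le\delta/(2n^2)$, so a union bound over the at most $n^2$ near pairs of $P$ gives the second bullet with probability $\ge 1-\delta/2$. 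For the first bullet, observe that $p_2^{t}=(p_1^{t})^{1/\rho}$ with $1/\rho\ge c^2-o(1)$, so that $p_2^{t}$ is smaller than $p_1^{t}\approx\log(n^2/\delta)/\ell$ by a polynomial factor in $n/\delta$; since $c\ge 5$, the resulting exponent makes $\ell\cdot n^2\cdot p_2^{t}\le\delta/2$ (in the regime where $\delta$ is at worst inverse-polynomial, which is the one used), and a union bound over the at most $n^2$ far pairs gives the first bullet as well. Finally $t=n^{o(1)}$: up to constants it equals $\log(n/\delta)$ divided by $c^2\log(1/p_1)=c^2\rho\log(1/p_2)=\Theta(\log(1/p_2))$, which is bounded below, so $t=O(\log(n/\delta))$.

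The running-time claims are then bookkeeping on top of this. One evaluation $f_i(p)$ is $t=n^{o(1)}$ base evaluations, hence $O(d\cdot n^{o(1)})$ time. For the preprocessing, iterate over all $i\le\ell$ and all $p\in P$, compute $f_i(p)$, and append $p$ to a bucket indexed by the pair $(i,f_i(p))$ --- via a hash dictionary, or by radix-sorting the $\ell n$ pairs --- for a total of $O(\ell n)\cdot O(d\cdot n^{o(1)})=O(\ell d\,n^{1+o(1)})$; each $T_i[u]$ is then a stored contiguous list, enumerated in $O(|T_i[u]|)$ time.

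I expect the real work to be the parameter balancing of the second paragraph: one has to check that a single pair $(t,\ell)$ makes the near-collision union bound and the far-collision union bound hold simultaneously, and that $t$ remains $n^{o(1)}$. This is exactly where the strength of the base family is used --- it is the bound $\rho\le 1/c^2+o(1)$ that lets $\ell=(n/\delta)^{1/c^2+o(1)}$ be enough on the near side, while the far side, amplified by the very same $t$, collapses to a vanishing collision probability essentially for free. Everything else follows from these two union bounds and the data-structure manipulations above.
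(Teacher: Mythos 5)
The paper does not prove this lemma: it is a citation to Section~D of \cite{Cohen-AddadLNSS20}, so there is no in-paper argument to compare to. Your outline (Andoni--Indyk base family, AND-concatenation by $t$, OR over $\ell$ tables, two union bounds over pairs) is the natural route and almost certainly the one taken in the cited reference, and your near-side bound, the estimate $t=O(\log(n/\delta))=n^{o(1)}$, and the preprocessing/query bookkeeping are all fine. The gap is precisely where you predicted it would be: the far-side parameter balance does not close with $\ell=(n/\delta)^{1/c^2}$, and the discrepancy is a polynomial factor, not something an $o(1)$ in the exponent can absorb.

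Concretely, your prescription sets $p_1^t=\Theta\!\left(\log(n/\delta)/\ell\right)$, so $p_2^t=(p_1^t)^{1/\rho}$ and the quantity you need to dominate is
\[
\ell\, n^2\, p_2^t \;=\; n^2\,\ell^{\,1-1/\rho}\,\Theta(\log(n/\delta))^{1/\rho}
\;=\; n^2\,(n/\delta)^{(1/c^2)(1-1/\rho)+o(1)}\polylog(n/\delta).
\]
Since $1/\rho = c^2(1\pm o(1))$, this is $n^2(n/\delta)^{1/c^2-1+o(1)}\polylog$. In the regime actually used ($\delta=1/n^2$, so $n/\delta=n^3$) this equals $n^{-1+3/c^2+o(1)}\polylog$, while the target is $\delta/2=1/(2n^2)$. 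Since $-1+3/c^2>-2$ for every $c$, the union bound misses by about $n^{1-3/c^2}$, and increasing $c$ does not rescue it: writing $\delta=n^{-a}$, the requirement $n^2(n/\delta)^{1/c^2-1}\le\delta$ reduces to $1\le -(1+a)/c^2$, which never holds. To make the far side close by this union bound one must push $p_2^t$ down to about $\delta/(n^2\ell)$, which in turn forces $\ell=(n^2/\delta)^{\rho/(1-\rho)+o(1)}=(n^2/\delta)^{1/(c^2-1)+o(1)}$; that exponent exceeds $1/c^2$ by an additive $\Theta(1/c^4)$, so either the stated $\ell$ must grow or $c$ must be re-parameterised. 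Your line ``the far side \dots\ collapses to a vanishing collision probability essentially for free'' is exactly the step that is not free: ``no far collision anywhere'' is a strictly stronger promise than the textbook LSH guarantee of ``few far collisions per query, filtered afterwards,'' and it costs an extra polynomial in $\ell$.
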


We use the previous lemma in two ways: first, it allows us to compute an approximate neighborhood of each point quickly, and second, combined with streaming techniques, to estimate the size of this neighborhood efficiently. We start with the former (where we replaced, for simplicity of notation, the success probability $1-\delta$ with $1-1/n^2$):

\begin{corollary}\label{cor:neighborhood}
For any $r \in \R^+$ and $P \subseteq \R^d$, there is a datastructure with preprocessing time $T_{pre} = O\lpar dn^{1+3/c^2+o(1)}\rpar$ that can, with probability $1-1/n^2$:
\begin{itemize}
    \item remove a point from $P$ in time $O\lpar n^{3/c^2}\rpar$
    \item answer the following query: 
for any point $p\in P$, compute a set $N(p, r)$ of points of $P$ such that $B(p, r)\cap P \subseteq N(p, r) \subseteq B(p, c\cdot r) \cap P$. The query time is $O\lpar n^{3/c^2} |N(p,r)|\rpar$.
\end{itemize}
\end{corollary}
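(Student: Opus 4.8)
The goal is to prove Corollary~\ref{cor:neighborhood} from Lemma~\ref{lem:lsh}. The plan is to instantiate Lemma~\ref{lem:lsh} with failure probability $\delta = 1/n^2$, which sets the number of hash functions to $\ell = (n \cdot n^2)^{1/c^2} = n^{3/c^2}$. With this choice, the preprocessing cost of Lemma~\ref{lem:lsh} is $O(\ell d \cdot n^{1+o(1)}) = O(d n^{1 + 3/c^2 + o(1)})$, which matches the claimed bound, and all the structural guarantees hold simultaneously with probability $1 - 1/n^2$. We condition on this good event for the rest of the argument.

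First I would describe the data structure: for each $i \in \{1, \dots, \ell\}$ we store the buckets $T_i[u] = \{p \in P : f_i(p) = u\}$, which by Lemma~\ref{lem:lsh} can be built within the stated preprocessing time (and we can also store, for each point $p$, pointers to the $\ell$ buckets $T_i[f_i(p)]$ containing it, computed by evaluating each $f_i(p)$ in time $O(d n^{o(1)})$, for a total of $O(\ell d n^{1+o(1)})$ — absorbed into preprocessing). To answer a query for $p$, I set $N(p,R) := \bigcup_{i=1}^\ell T_i[f_i(p)]$, computed by scanning the $\ell$ buckets containing $p$ and deduplicating. For correctness: any $q \in B(p,R) \cap P$ has $\dist(p,q) \le R$, so by the second LSH property there is some $i$ with $f_i(q) = f_i(p)$, i.e. $q \in T_i[f_i(p)] \subseteq N(p,R)$; conversely any $q \in N(p,R)$ lies in some $T_i[f_i(p)]$, so $f_i(p) = f_i(q)$, and by (the contrapositive of) the first LSH property $\dist(p,q) < c \cdot R$, giving $N(p,R) \subseteq B(p, c\cdot R) \cap P$. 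For the query time, each of the $\ell$ buckets $T_i[f_i(p)]$ has size at most $|N(p,R)|$ (since it is a subset of $N(p,R)$) and is retrieved in time $O(|T_i[f_i(p)]|)$ by Lemma~\ref{lem:lsh}; summing over $i$ and accounting for deduplication (e.g. via a hash set) gives $O(\ell \cdot |N(p,R)|) = O(n^{3/c^2} |N(p,R)|)$. Deletion of a point $q$ amounts to removing $q$ from the $\ell$ buckets $T_i[f_i(q)]$ it belongs to; using doubly-linked lists for buckets together with the stored back-pointers, each removal is $O(1)$, for a total of $O(\ell) = O(n^{3/c^2})$ per deletion.

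The only genuinely delicate point is that the LSH guarantees of Lemma~\ref{lem:lsh} are stated for a \emph{fixed} point set $P$, whereas Corollary~\ref{cor:neighborhood} must support deletions and still answer queries correctly on the shrunken set. This is not actually an obstacle: the guarantees of Lemma~\ref{lem:lsh} are monotone under taking subsets — if $f_i(p) \ne f_i(q)$ for all far pairs in $P$, the same holds for any $P' \subseteq P$, and if some colliding $i$ exists for a close pair in $P$ it still exists in $P'$ provided both points survive — so conditioning once on the good event for the initial $P$ suffices for all future query sets. One should also note the minor subtlety that the first bullet of Lemma~\ref{lem:lsh} gives $\dist(p,q) \ge cR \Rightarrow$ separation, so its contrapositive yields $\dist(p,q) < cR$ for colliding pairs, which is consistent with the (non-strict) containment $N(p,R) \subseteq B(p,cR) \cap P$ claimed. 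Assembling these observations — the parameter substitution, the union-of-buckets construction, the subset-monotonicity of correctness, and the $O(\ell)$-per-operation bookkeeping — completes the proof.
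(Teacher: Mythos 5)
Your proof is correct and follows essentially the same route as the paper's: instantiate Lemma~\ref{lem:lsh} with $\delta = 1/n^2$ so $\ell = n^{3/c^2}$, define $N(p,R)$ as the union of the $\ell$ buckets containing $p$, and charge $O(\ell)$ per deletion and $O(\ell\cdot|N(p,R)|)$ per query. Your additional remark about the subset-monotonicity of the LSH guarantees under deletions is a small clarification the paper leaves implicit, but the argument is the same.
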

\begin{proof}This is a direct application of \Cref{lem:lsh}: given $r$ and $\delta = 1/n^2$, compute $f_i(p)$ for all $i$ and $p$, in time $O\left(dn^{1+3/c^2+o(1)}\right)$. First, to remove a point $p$ from $P$, simply remove it from all the tables $T_i[f_i(p)]$ for $i = 1, ..., \ell$: this takes time $O(\ell) = O\left(n^{3/c^2}\right)$.

To answer a query given a point $p$, compute $T_i[f_i(p)]$ for all $i$, in time $O(|T_i[f_i(p)]|)$ and define $N(p, r) := \cup_{i=1}^\ell T_i[f_i(p)]$. The running time to compute the union is at most $\ell \cdot O(|N(p, r)|) = O(n^{3/c^2} |N(p, r)|)$. The first two bullets of \Cref{lem:lsh} ensure the desired accuracy guarantee. 
\end{proof}

Hence, in the vocabulary of \Cref{thm:runningtime}, $T_{rm} = T_N = O\lpar n^{3/c^2}\rpar$. It only remains to compute the values: this can be easily done combining \Cref{lem:lsh} with the sketching techniques of \Cref{lem:sizeUnion}, as follows:
\begin{lemma}\label{lem:computeVal}
Given a radius $r$, there is an algorithm that runs in time $O\lpar dn^{1+3/c^2 + o(1)}\rpar$ and computes, for all $p \in P$, $\nval(B(p, r))$ such that, with probability $1-1/n^2$, it holds that $\forall p, ~r^z \cdot |B(p, r)\cap P|/3 \leq \nval(B(p, r)) \leq 3r^z \cdot |B(p, c\cdot r)\cap P|$. 
\end{lemma}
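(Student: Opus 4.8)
The plan is to combine the LSH data structure of \Cref{lem:lsh} with a mergeable sketch for estimating the number of distinct elements in a data stream (e.g.\ the AMS / Flajolet--Martin sketch of \cite{flajolet1985probabilistic, alon1996space}), used with accuracy parameter, say, $\eps = 1/2$, so that every sketch returns a value within a factor $(1\pm 1/2) \subseteq [1/3, 3/2]$ of the true cardinality. First I would invoke \Cref{lem:lsh} with the given radius $R$ and failure probability $\delta = 1/n^2$ to obtain, in preprocessing time $O(\ell d\, n^{1+o(1)}) = O(dn^{1+3/c^2 + o(1)})$, the hash functions $f_1,\dots,f_\ell$ with $\ell = (n^3)^{1/c^2} = n^{3/c^2}$ and the associated inverse tables $T_i[u] = \{p \in P : f_i(p) = u\}$. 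As in \Cref{cor:neighborhood}, the key geometric consequence is that for every $p \in P$, writing $N(p) := \bigcup_{i=1}^{\ell} T_i[f_i(p)]$, we have $B(p,R)\cap P \subseteq N(p) \subseteq B(p, c\cdot R)\cap P$: the first inclusion holds because any $q$ with $\dist(p,q)\le R$ collides with $p$ under some $f_i$, and the second because any $q$ with $\dist(p,q)\ge c R$ collides with $p$ under no $f_i$.

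Next I would compute, for each hash table slot $u$ that is non-empty, a distinct-elements sketch $\mathrm{sk}(T_i[u])$ by streaming over the elements of $T_i[u]$; since the sketches have size $n^{o(1)}$ and $\sum_{i,u} |T_i[u]| = \ell n$, this takes total time $O(\ell n \cdot n^{o(1)}) = O(n^{1+3/c^2+o(1)})$ (the $d$ factor is already absorbed by the hashing step). Crucially the sketch is \emph{mergeable}: from $\mathrm{sk}(A)$ and $\mathrm{sk}(B)$ one can compute $\mathrm{sk}(A\cup B)$ in time $n^{o(1)}$, with the merged object being a valid sketch of $A\cup B$ regardless of whether $A$ and $B$ overlap. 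So for each $p\in P$ I merge the $\ell$ sketches $\mathrm{sk}(T_i[f_i(p)])$, $i=1,\dots,\ell$, to obtain a sketch of $N(p)$ in time $O(\ell \cdot n^{o(1)}) = O(n^{3/c^2+o(1)})$, query it to get an estimate $\widehat{m}(p)$ with $|N(p)|/3 \le |N(p)|\cdot(1-\tfrac12) \le \widehat{m}(p) \le |N(p)|\cdot(1+\tfrac12) \le 3|N(p)|$ (crudely bounding $3/2 \le 3$), and output $\nval(B(p,R)) := R^z \cdot \widehat{m}(p)$. Combining with the sandwich $|B(p,R)\cap P| \le |N(p)| \le |B(p, cR)\cap P|$ gives exactly the claimed bound $R^z|B(p,R)\cap P|/3 \le \nval(B(p,R)) \le 3R^z|B(p,cR)\cap P|$. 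The total running time is $O(dn^{1+3/c^2+o(1)})$ for all $n$ points, as required.

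For the probability bound, there are two sources of failure: the LSH guarantee fails with probability $\le 1/n^2$, and each individual distinct-elements estimate fails with some probability that, by the standard median-of-$O(\log n)$-copies amplification built into the sketch (costing only another $n^{o(1)}$ factor), can be driven below $1/n^3$; a union bound over the $n$ queries keeps the total sketch failure probability below $1/n^2$, and a final union bound over the two events yields overall success probability $1 - 1/n^2$ after adjusting constants (or, if one prefers, stating the guarantee with $1 - 2/n^2$ and noting this is $1-1/n^2$ up to renaming, as the excerpt already does for such simplifications). The main subtlety — and the only place the argument is not completely routine — is making sure the \emph{same} random sketches are reused across all $n$ merges without the failure probabilities compounding: this is handled by the median amplification giving a per-slot-sketch error probability small enough that the union is over the (at most $n$) \emph{output} estimates rather than over the (up to $\ell n$) intermediate ones, which is legitimate because a correct sketch of each $T_i[u]$ deterministically produces a correct sketch of any union it participates in, up to the final query's own error.
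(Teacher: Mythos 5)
Your proposal follows essentially the same route as the paper's proof: invoke \Cref{lem:lsh} with the given $R$ and $\delta=1/n^2$ to obtain the sandwich $B(p,R)\cap P \subseteq \bigcup_i T_i[f_i(p)] \subseteq B(p,cR)\cap P$, then estimate the cardinality of that union by a mergeable Flajolet--Martin/AMS-style distinct-elements sketch per non-empty hash slot, merge the $\ell$ relevant per-slot sketches for each query point, and amplify the per-point constant success probability to $1-1/n^3$ via a median of $O(\log n)$ independent repetitions before union-bounding over all $n$ points. The paper phrases the sketch concretely as $Y_i[u]=\max_{q\in T_i[u]}r(q)$ and observes directly that $\max_i Y_i[f_i(p)]$ \emph{is} the sketch of the union (so the accuracy guarantee applies to the union, exactly resolving the subtlety you flag at the end), whereas you use a generic $(1\pm\tfrac12)$-accurate mergeable sketch and argue informally for the same point; both yield the stated $[1/3,3]$ multiplicative bound and the $O(dn^{1+3/c^2+o(1)})$ running time, so the two arguments are substantively identical.
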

\begin{proof}
We show how to compute, for all $p \in P$, an approximate count of the number of points in $B(p, r)$, namely a value $\countVal(p, r)$ such that $ |B(p, r) \cap P|/3 \leq \countVal(p, r) \leq 3r \cdot |B(p, c \cdot r) \cap P|$. Multiplying $\countVal$ by $r^z$ gives the lemma.

To build the estimates $\countVal(p, r)$, the first step of the algorithm is to compute $f_i(p)$, for all $i \in \{1, ..., \ell\}$ and all $p \in P$, using \Cref{lem:lsh} with $r$ and $\delta = 1/n^2$. This takes time $O\lpar dn^{1+3/c^2 + o(1)}\rpar$. 
Due to \Cref{lem:lsh}, we have the guarantee that, with probability $1-1/n^2$, 
\begin{align*}
|B(p, r)\cap P| \leq \left|\cup_{i=1}^\ell T_i[f_i(p)]\right| \leq |B(p, c\cdot r)\cap P|.
\end{align*}
Therefore, it is merely enough to estimate $\left|\cup_{i=1}^\ell T_i[f_i(p)]\right|$ using \Cref{lem:sizeUnion} (with $S_{i, u} = T_i[u]$, $t = n$, and for all $p$, $Q_p = \{f_i(p), i = 1, ..., \ell\}$).
As each point $p$ is in at most $\ell$ sets $S_{i,u}$, and each $Q_p$ has size at most $\ell$, the running time of the algorithm from \Cref{lem:sizeUnion} is $ O(\ell n)$

Hence, the overall running time is $O\lpar dn^{1+3/c^2 + o(1)}\rpar + O(n  \ell \log(n)) = O\lpar dn^{1+3/c^2 + o(1)}\rpar$.
\end{proof}

Thus, Theorem~\ref{thm:runningtime} implies:

\begin{corollary}
In Euclidean space, the recursive greedy algorithm can be implemented to provide a $\poly(c)$-approximation to incremental $(k,z)$-clustering in time $O(n^{1+3/c^2 + o(1)} d \log \Delta)$.
\end{corollary}

\section{Near-linear time implementation for Graphs}\label{sec:graphs}

In sparse graphs, we can use black-box a set of results to compute the values and approximated balls. First, we fix some notation: for a vertex $u$ and a radius $r$, the ball $B_v(u, r)$ is the set of all vertices at distance at most $d$ from $u$ ; and the ball $B_e(u, r)$ is the set of all edges with at least one endpoint in $B_v(u, r)$.

In order to apply \cref{thm:runningtime}, we will bound the running time to compute exactly $B_v(u, r)$, to compute a $(1+1/10)$-approximation of $|B_v(u, r)|$ and to remove efficiently vertices in a $(1+1/10)$-approximate ball. The running time will depend on $|B_e(u, r)|$, as can be naturally expected in graphs: in our theorem statement, $P$ is the graph itself and therefore $|P|$ is its number of edges.

First, to compute a ball $B_v(u, r)$ we simply use Dijkstra's algorithm: it considers only once each edge adjacent to a vertex at distance less than $r$ of $u$, hence the running time is precisely $O(\log n) |B_e(u,r)|$ (note that there is no need for approximate balls at this step and the computation can be done exactly).

To compute the values of all balls, we use the following result of Edith Cohen, that allows to approximate the number of vertices in all balls in near-linear time:

\begin{lemma}[Theorem 5.1 in \cite{COHEN1997441}]\label{lem:cohen}
There exists an algorithm that takes as input a metric induced by an edge-weighted graph $G = (V, E,w)$ with $n$ vertices and $m$ edges. The algorithm has expected preprocessing time $O(m \log^2 n + n \log^3 n)$ and allows queries $\tilde{n}_v(u, r)$ for any pair $(u, r) \in V \times \mathbb{R}^+$ that estimate the number of points in the ball $B_v(u, r)$ such that:
\begin{itemize}
    \item The expected query time is $O(\log \log n)$.
    \item With probability $1 - O(1/\mathrm{poly}(n))$, for all $(u, r) \in V \times \mathbb{R}^+$, 
  \[
  \frac{\big||B_v(u, r)| - \tilde{n}_v(u, r)\big|}{|B_v(u, r)|} \leq 1/10.
  \]
\end{itemize}
\end{lemma}

Hence, using this result, we can directly compute the values of all the $O(n \log \Delta)$ balls, in expected time $T_{\nval} = O\lpar m \log^2 n + n \log^3 n + n \log \Delta \log \log n\rpar$.

Thus, it only remains to remove points from approximate balls. For this, we rely on a recent result of \cite{GorkiewiczK25}, who showed:
\begin{lemma}[Theorem 4.1 in \cite{GorkiewiczK25}]\label{lem:incrementalSP}
    Let $G = (V, E)$ be a digraph with edge weight in $\{0\} \cup [1,W]$, and let $s \in V$. There exists a deterministic datastructure explicitly maintaining distances estimate $d : V \rightarrow \mathbb{R}_{\geq 0}$ satisfying $\forall v, \dist_G(v, s) \leq d(v) \leq (1+1/10)\dist_G(v, s)$, and supporting edge insertion of source edges $e = (s, v)$.

    The total update time of the datastructure is $O(m\log(nW) \log^2 n)$, where $m$ is the final number of edges of $G$.
\end{lemma}
We use this theorem as follows: first, starting from our undirected graph $G$, we duplicate edges to have a copy of each in each direction. Then, we add an extra source vertex $s$, not connected to any other vertex. We instantiate a copy of the datastructure of the theorem. To remove a $11/10$-approximate ball centered at $x$ of radius $r$, we merely add an edge $(s,x)$ with weight 0, and update distances using the datastructure: the points that the algorithm needs to remove (and have not been already removed) are all those whose estimated distance to $s$ decreases below $r$. Since the datastructure maintains explicitly all distances to $s$, it is easy to compute them with the same running time as the update -- hence, a total update time $O(m\log(nW) \log^2 n)$.

Combining Dijkstra's algorithm with those of \cref{lem:cohen} and \cref{lem:incrementalSP} concludes therefore the implementation of our algorithm in graphs, with a total expected running time of $OO(m\log^2(nW) \log^2 n)$.

\bibliography{biblio}

@inproceedings{GorkiewiczK25,
  author       = {Adam G{\'{o}}rkiewicz and
                  Adam Karczmarz},
  editor       = {Keren Censor{-}Hillel and
                  Fabrizio Grandoni and
                  Jo{\"{e}}l Ouaknine and
                  Gabriele Puppis},
  title        = {On Incremental Approximate Shortest Paths in Directed Graphs},
  booktitle    = {52nd International Colloquium on Automata, Languages, and Programming,
                  {ICALP} 2025, Aarhus, Denmark, July 8-11, 2025},
  series       = {LIPIcs},
  volume       = {334},
  pages        = {93:1--93:20},
  publisher    = {Schloss Dagstuhl - Leibniz-Zentrum f{\"{u}}r Informatik},
  year         = {2025},
  url          = {https://doi.org/10.4230/LIPIcs.ICALP.2025.93},
  doi          = {10.4230/LIPICS.ICALP.2025.93},
  timestamp    = {Fri, 21 Nov 2025 23:44:11 +0100},
  biburl       = {https://dblp.org/rec/conf/icalp/GorkiewiczK25.bib},
  bibsource    = {dblp computer science bibliography, https://dblp.org}
}

@inproceedings{CharikarCGGLW25,
  author       = {Moses Charikar and
                  Vincent Cohen{-}Addad and
                  Ruiquan Gao and
                  Fabrizio Grandoni and
                  Euiwoong Lee and
                  Ernest van Wijland},
  title        = {An Improved Greedy Approximation for (Metric) k-Means},
  booktitle    = {66th {IEEE} Annual Symposium on Foundations of Computer Science, {FOCS}
                  2025, Sydney, Australia, December 14-17, 2025},
  pages        = {233--240},
  publisher    = {{IEEE}},
  year         = {2025},
  url          = {https://doi.org/10.1109/FOCS63196.2025.00016},
  doi          = {10.1109/FOCS63196.2025.00016}
}

@inproceedings{EuclideanKM,
  author       = {Moses Charikar and
                  Vincent Cohen{-}Addad and
                  Ruiquan Gao and
                  Fabrizio Grandoni and
                  Euiwoong Lee and
                  Ernest van Wijland},
  title        = {A $(4 + \varepsilon)$-Approximation for Euclidean $k$-Means via Non-Monotone Dual-Fitting},
  booktitle    = {Proceedings of the 58th Annual ACM Symposium on Theory of Computing ({STOC}’26, June
22–26, 2026, Salt Lake City, UT, USA},
  publisher    = {{ACM}},
  year         = {2026},
  url          = {https://doi.org/10.1145/3798129.3800894},
  doi          = {10.1145/3798129.3800894}
}

@book{har2011geometric,
  title={Geometric approximation algorithms},
  author={Har-Peled, Sariel},
  number={173},
  year={2011},
  publisher={American Mathematical Soc.}
}

@article{arxiv,
  author       = {Vincent Cohen{-}Addad and
                  Liudeng Wang and
                  David P. Woodruff and
                  Samson Zhou},
  title        = {Fast, Space-Optimal Streaming Algorithms for Clustering and
Subspace Embeddings},
  journal      = {arXiv prepublication},
  volume       = {abs/2504.16229},
  year         = {2025},
}

@article{KolliopoulosR07,
  author       = {Stavros G. Kolliopoulos and
                  Satish Rao},
  title        = {A Nearly Linear-Time Approximation Scheme for the Euclidean k-Median
                  Problem},
  journal      = {{SIAM} J. Comput.},
  volume       = {37},
  number       = {3},
  pages        = {757--782},
  year         = {2007},
  url          = {https://doi.org/10.1137/S0097539702404055},
  doi          = {10.1137/S0097539702404055},
  timestamp    = {Sat, 27 May 2017 14:22:58 +0200},
  biburl       = {https://dblp.org/rec/journals/siamcomp/KolliopoulosR07.bib},
  bibsource    = {dblp computer science bibliography, https://dblp.org}
}

@article{Arora98,
  author       = {Sanjeev Arora},
  title        = {Polynomial Time Approximation Schemes for Euclidean Traveling Salesman
                  and other Geometric Problems},
  journal      = {J. {ACM}},
  volume       = {45},
  number       = {5},
  pages        = {753--782},
  year         = {1998},
  url          = {https://doi.org/10.1145/290179.290180},
  doi          = {10.1145/290179.290180},
  timestamp    = {Wed, 14 Nov 2018 10:35:23 +0100},
  biburl       = {https://dblp.org/rec/journals/jacm/Arora98.bib},
  bibsource    = {dblp computer science bibliography, https://dblp.org}
}

@article{2approx,
  author       = {Vincent Cohen{-}Addad and
                  Fabrizio Grandoni and
                  Euiwoong Lee and
                  Chris Schwiegelshohn and
                  Ola Svensson},
  title        = {A (2+$\varepsilon$)-Approximation Algorithm for Metric \emph{k}-Median},
  journal      = {To appear at STOC},
  volume       = {},
  year         = {2025},
  url          = {https://doi.org/10.48550/arXiv.2503.10972}
}

@inproceedings{Cohen-Addad18,
  author       = {Vincent Cohen{-}Addad},
  editor       = {Artur Czumaj},
  title        = {A Fast Approximation Scheme for Low-Dimensional k-Means},
  booktitle    = {Proceedings of the Twenty-Ninth Annual {ACM-SIAM} Symposium on Discrete
                  Algorithms, {SODA} 2018, New Orleans, LA, USA, January 7-10, 2018},
  pages        = {430--440},
  publisher    = {{SIAM}},
  year         = {2018},
  url          = {https://doi.org/10.1137/1.9781611975031.29},
  doi          = {10.1137/1.9781611975031.29},
  timestamp    = {Tue, 02 Feb 2021 17:07:58 +0100},
  biburl       = {https://dblp.org/rec/conf/soda/Cohen-Addad18.bib},
  bibsource    = {dblp computer science bibliography, https://dblp.org}
}

@inproceedings{Cohen-AddadS17,
  author       = {Vincent Cohen{-}Addad and
                  Chris Schwiegelshohn},
  editor       = {Chris Umans},
  title        = {On the Local Structure of Stable Clustering Instances},
  booktitle    = {58th {IEEE} Annual Symposium on Foundations of Computer Science, {FOCS}
                  2017, Berkeley, CA, USA, October 15-17, 2017},
  pages        = {49--60},
  publisher    = {{IEEE} Computer Society},
  year         = {2017},
  url          = {https://doi.org/10.1109/FOCS.2017.14},
  doi          = {10.1109/FOCS.2017.14},
  timestamp    = {Thu, 23 Mar 2023 23:57:52 +0100},
  biburl       = {https://dblp.org/rec/conf/focs/Cohen-AddadS17.bib},
  bibsource    = {dblp computer science bibliography, https://dblp.org}
}

@misc{jiang2025localsearchclusteringalmostlinear,
  author       = {Shaofeng H.{-}C. Jiang and
                  Yaonan Jin and
                  Jianing Lou and
                  Pinyan Lu},
  editor       = {Kasper Green Larsen and
                  Barna Saha},
  title        = {Local Search for Clustering in Almost-linear Time},
  booktitle    = {Proceedings of the 2026 Annual {ACM-SIAM} Symposium on Discrete Algorithms,
                  {SODA} 2026, Vancouver, BC, Canada, January 11-14, 2026},
  pages        = {5960--5977},
  publisher    = {{SIAM}},
  year         = {2026},
  url          = {https://doi.org/10.1137/1.9781611978971.212},
  doi          = {10.1137/1.9781611978971.212},
  timestamp    = {Thu, 19 Feb 2026 16:57:53 +0100},
  biburl       = {https://dblp.org/rec/conf/soda/JiangJLL26.bib},
  bibsource    = {dblp computer science bibliography, https://dblp.org}
}

@article{ChrobakKY06,
  author       = {Marek Chrobak and
                  Claire Kenyon and
                  Neal E. Young},
  title        = {The reverse greedy algorithm for the metric \emph{k}-median problem},
  journal      = {Inf. Process. Lett.},
  volume       = {97},
  number       = {2},
  pages        = {68--72},
  year         = {2006},
  url          = {https://doi.org/10.1016/j.ipl.2005.09.009},
  doi          = {10.1016/J.IPL.2005.09.009},
  timestamp    = {Fri, 27 Dec 2019 21:18:32 +0100},
  biburl       = {https://dblp.org/rec/journals/ipl/ChrobakKY06.bib},
  bibsource    = {dblp computer science bibliography, https://dblp.org}
}

@inproceedings{KrishnaswamyLS18,
  author       = {Ravishankar Krishnaswamy and
                  Shi Li and
                  Sai Sandeep},
  editor       = {Ilias Diakonikolas and
                  David Kempe and
                  Monika Henzinger},
  title        = {Constant approximation for k-median and k-means with outliers via
                  iterative rounding},
  booktitle    = {Proceedings of the 50th Annual {ACM} {SIGACT} Symposium on Theory
                  of Computing, {STOC} 2018, Los Angeles, CA, USA, June 25-29, 2018},
  pages        = {646--659},
  publisher    = {{ACM}},
  year         = {2018},
  url          = {https://doi.org/10.1145/3188745.3188882},
  doi          = {10.1145/3188745.3188882},
  timestamp    = {Thu, 29 Apr 2021 15:44:28 +0200},
  biburl       = {https://dblp.org/rec/conf/stoc/KrishnaswamyLS18.bib},
  bibsource    = {dblp computer science bibliography, https://dblp.org}
}

@inproceedings{ByrkaSS18,
  author       = {Jaroslaw Byrka and
                  Krzysztof Sornat and
                  Joachim Spoerhase},
  editor       = {Ilias Diakonikolas and
                  David Kempe and
                  Monika Henzinger},
  title        = {Constant-factor approximation for ordered k-median},
  booktitle    = {Proceedings of the 50th Annual {ACM} {SIGACT} Symposium on Theory
                  of Computing, {STOC} 2018, Los Angeles, CA, USA, June 25-29, 2018},
  pages        = {620--631},
  publisher    = {{ACM}},
  year         = {2018},
  url          = {https://doi.org/10.1145/3188745.3188930},
  doi          = {10.1145/3188745.3188930},
  timestamp    = {Mon, 03 Mar 2025 21:22:52 +0100},
  biburl       = {https://dblp.org/rec/conf/stoc/ByrkaSS18.bib},
  bibsource    = {dblp computer science bibliography, https://dblp.org}
}

@inproceedings{Cohen-AddadGHOS22,
  author       = {Vincent Cohen{-}Addad and
                  Anupam Gupta and
                  Lunjia Hu and
                  Hoon Oh and
                  David Saulpic},
  editor       = {Joseph (Seffi) Naor and
                  Niv Buchbinder},
  title        = {An Improved Local Search Algorithm for k-Median},
  booktitle    = {Proceedings of the 2022 {ACM-SIAM} Symposium on Discrete Algorithms,
                  {SODA} 2022, Virtual Conference / Alexandria, VA, USA, January 9 -
                  12, 2022},
  pages        = {1556--1612},
  publisher    = {{SIAM}},
  year         = {2022},
  url          = {https://doi.org/10.1137/1.9781611977073.65},
  doi          = {10.1137/1.9781611977073.65},
  timestamp    = {Sat, 30 Sep 2023 09:57:08 +0200},
  biburl       = {https://dblp.org/rec/conf/soda/Cohen-AddadGHOS22.bib},
  bibsource    = {dblp computer science bibliography, https://dblp.org}
}

@article{Kruskal,
 ISSN = {00029939, 10886826},
 URL = {http://www.jstor.org/stable/2033241},
 author = {Joseph B. Kruskal},
 journal = {Proceedings of the American Mathematical Society},
 number = {1},
 pages = {48--50},
 publisher = {American Mathematical Society},
 title = {On the Shortest Spanning Subtree of a Graph and the Traveling Salesman Problem},
 urldate = {2025-01-24},
 volume = {7},
 year = {1956}
}

@inproceedings{Har-PeledIS13,
  author       = {Sariel Har{-}Peled and
                  Piotr Indyk and
                  Anastasios Sidiropoulos},
  editor       = {Sanjeev Khanna},
  title        = {Euclidean spanners in high dimensions},
  booktitle    = {Proceedings of the Twenty-Fourth Annual {ACM-SIAM} Symposium on Discrete
                  Algorithms, {SODA} 2013, New Orleans, Louisiana, USA, January 6-8,
                  2013},
  pages        = {804--809},
  publisher    = {{SIAM}},
  year         = {2013},
  url          = {https://doi.org/10.1137/1.9781611973105.57},
  doi          = {10.1137/1.9781611973105.57},
  timestamp    = {Tue, 07 May 2024 20:08:16 +0200},
  biburl       = {https://dblp.org/rec/conf/soda/Har-PeledIS13.bib},
  bibsource    = {dblp computer science bibliography, https://dblp.org}
}

@InProceedings{FacilityLocationSubLinear,
author="B{\u{a}}doiu, Mihai
and Czumaj, Artur
and Indyk, Piotr
and Sohler, Christian",
editor="Caires, Lu{\'i}s
and Italiano, Giuseppe F.
and Monteiro, Lu{\'i}s
and Palamidessi, Catuscia
and Yung, Moti",
title="Facility Location in Sublinear Time",
booktitle="Automata, Languages and Programming",
year="2005",
publisher="Springer Berlin Heidelberg",
address="Berlin, Heidelberg",
pages="866--877",
abstract="In this paper we present a randomized constant factor approximation algorithm for the problem of computing the optimal cost of the metric Minimum Facility Location problem, in the case of uniform costs and uniform demands, and in which every point can open a facility. By exploiting the fact that we are approximating the optimal cost without computing an actual solution, we give the first algorithm for this problem with running time O(n log2n), where n is the number of metric space points. Since the size of the representation of an n-point metric space is $\Theta$(n2), the complexity of our algorithm is sublinear with respect to the input size.",
isbn="978-3-540-31691-6"
}

@article{COHEN1997441,
title = {Size-Estimation Framework with Applications to Transitive Closure and Reachability},
journal = {Journal of Computer and System Sciences},
volume = {55},
number = {3},
pages = {441-453},
year = {1997},
issn = {0022-0000},
doi = {https://doi.org/10.1006/jcss.1997.1534},
url = {https://www.sciencedirect.com/science/article/pii/S0022000097915348},
author = {Edith Cohen},
abstract = {Computing the transitive closure in directed graphs is a fundamental graph problem. We consider the more restricted problem of computing the number of nodes reachable from every node and the size of the transitive closure. The fastest known transitive closure algorithms run inO(min{mn,n2.38}) time, wherenis the number of nodes andmthe number of edges in the graph. We present anO(m) time randomized (Monte Carlo) algorithm that estimates, with small relative error, the sizes of all reachability sets and the transitive closure. Another ramification of our estimation scheme is a Õ(m) time algorithm for estimating sizes of neighborhoods in directed graphs with nonnegative edge lengths. Our size-estimation algorithms are much faster than performing the respective explicit computations.}
}

@article{Thorup04,
  author       = {Mikkel Thorup},
  title        = {Quick k-Median, k-Center, and Facility Location for Sparse Graphs},
  journal      = {{SIAM} J. Comput.},
  volume       = {34},
  number       = {2},
  pages        = {405--432},
  year         = {2004},
  url          = {https://doi.org/10.1137/S0097539701388884},
  doi          = {10.1137/S0097539701388884},
  timestamp    = {Sat, 27 May 2017 14:22:59 +0200},
  biburl       = {https://dblp.org/rec/journals/siamcomp/Thorup04.bib},
  bibsource    = {dblp computer science bibliography, https://dblp.org}
}

@article{AndoniI07,
  author       = {Alexandr Andoni and
                  Piotr Indyk},
  title        = {Near-Optimal Hashing Algorithms for Approximate Nearest Neighbor in
                  High Dimensions},
  booktitle    = {47th Annual {IEEE} Symposium on Foundations of Computer Science {(FOCS}
                  2006), 21-24 October 2006, Berkeley, California, USA, Proceedings},
  pages        = {459--468},
  publisher    = {{IEEE} Computer Society},
  year         = {2006},
  url          = {https://doi.org/10.1109/FOCS.2006.49},
  doi          = {10.1109/FOCS.2006.49},
  timestamp    = {Thu, 23 Mar 2023 23:57:54 +0100},
  biburl       = {https://dblp.org/rec/conf/focs/AndoniI06.bib},
  bibsource    = {dblp computer science bibliography, https://dblp.org}
}

@article{charikar2023simple,
  title={Simple, scalable and effective clustering via one-dimensional projections},
  author={Charikar, Moses and Henzinger, Monika and Hu, Lunjia and V{\"o}tsch, Maximilian and Waingarten, Erik},
  journal={Advances in Neural Information Processing Systems},
  volume={36},
  pages={64618--64649},
  year={2023}
}

@article{reduceDiam,
  author       = {Andrew Draganov and
                  David Saulpic and
                  Chris Schwiegelshohn},
  title        = {Settling Time vs. Accuracy Tradeoffs for Clustering Big Data},
  journal      = {{SIGMOD 2024}},
  year         = {2024}
}

@article{flajolet1985probabilistic,
  title={Probabilistic counting algorithms for data base applications},
  author={Flajolet, Philippe and Martin, G Nigel},
  journal={Journal of computer and system sciences},
  volume={31},
  number={2},
  pages={182--209},
  year={1985},
  publisher={Elsevier}
}

@inproceedings{Cohen-AddadSL22,
  author       = {Vincent Cohen{-}Addad and
                  {Karthik {C. S.}} and
                  Euiwoong Lee},
  title        = {Johnson Coverage Hypothesis: Inapproximability of k-means and k-median
                  in $\ell_p$-metrics},
  booktitle    = {Symposium on Discrete Algorithms,
                  {SODA}},
  pages        = {1493--1530},
  year         = {2022},
  url          = {https://doi.org/10.1137/1.9781611977073.63},
  doi          = {10.1137/1.9781611977073.63},
  timestamp    = {Wed, 07 Dec 2022 23:12:20 +0100},
  biburl       = {https://dblp.org/rec/conf/soda/Cohen-AddadSL22.bib},
  bibsource    = {dblp computer science bibliography, https://dblp.org}
}

@inproceedings{alon1996space,
  title={The space complexity of approximating the frequency moments},
  author={Alon, Noga and Matias, Yossi and Szegedy, Mario},
  booktitle={Proceedings of the twenty-eighth annual ACM symposium on Theory of computing},
  pages={20--29},
  year={1996}
}

@inproceedings{Cohen-AddadLSS22,
  author       = {Vincent Cohen{-}Addad and
                  Kasper Green Larsen and
                  David Saulpic and
                  Chris Schwiegelshohn},
  editor       = {Stefano Leonardi and
                  Anupam Gupta},
  title        = {Towards optimal lower bounds for k-median and k-means coresets},
  booktitle    = {{STOC} '22: 54th Annual {ACM} {SIGACT} Symposium on Theory of Computing,
                  Rome, Italy, June 20 - 24, 2022},
  pages        = {1038--1051},
  publisher    = {{ACM}},
  year         = {2022},
  url          = {https://doi.org/10.1145/3519935.3519946},
  doi          = {10.1145/3519935.3519946},
  timestamp    = {Tue, 27 Dec 2022 09:06:31 +0100},
  biburl       = {https://dblp.org/rec/conf/stoc/Cohen-AddadLSS22.bib},
  bibsource    = {dblp computer science bibliography, https://dblp.org}
}

@article{GuK99,
  author    = {Sudipto Guha and
               Samir Khuller},
  title     = {Greedy Strikes Back: Improved Facility Location Algorithms},
  journal   = {J. Algorithms},
  volume    = {31},
  number    = {1},
  pages     = {228--248},
  year      = {1999},
  url       = {http://dx.doi.org/10.1006/jagm.1998.0993},
  doi       = {10.1006/jagm.1998.0993},
  timestamp = {Fri, 01 Jul 2011 13:43:08 +0200},
  biburl    = {http://dblp.uni-trier.de/rec/bib/journals/jal/GuhaK99},
  bibsource = {dblp computer science bibliography, http://dblp.org}
}

@article{MahajanNV12,
	author = {Mahajan, Meena and Nimbhorkar, Prajakta and Varadarajan, Kasturi R.},
	journal = {Theor. Comput. Sci.},
	pages = {13--21},
	title = {The planar k-means problem is NP-hard},
	volume = {442},
	year = {2012}
}

@inproceedings{LattanziS19,
  author       = {Silvio Lattanzi and
                  Christian Sohler},
  editor       = {Kamalika Chaudhuri and
                  Ruslan Salakhutdinov},
  title        = {A Better k-means++ Algorithm via Local Search},
  booktitle    = {Proceedings of the 36th International Conference on Machine Learning,
                  {ICML} 2019, 9-15 June 2019, Long Beach, California, {USA}},
  series       = {Proceedings of Machine Learning Research},
  volume       = {97},
  pages        = {3662--3671},
  publisher    = {{PMLR}},
  year         = {2019},
  url          = {http://proceedings.mlr.press/v97/lattanzi19a.html},
  timestamp    = {Mon, 26 Jun 2023 20:40:35 +0200},
  biburl       = {https://dblp.org/rec/conf/icml/LattanziS19.bib},
  bibsource    = {dblp computer science bibliography, https://dblp.org}
}

@inproceedings{Cohen-AddadEMN22,
  author    = {Vincent Cohen{-}Addad and
               Hossein Esfandiari and
               Vahab S. Mirrokni and
               Shyam Narayanan},
  editor    = {Stefano Leonardi and
               Anupam Gupta},
  title     = {Improved approximations for Euclidean \emph{k}-means and \emph{k}-median,
               via nested quasi-independent sets},
  booktitle = {{STOC} '22: 54th Annual {ACM} {SIGACT} Symposium on Theory of Computing,
               Rome, Italy, June 20 - 24, 2022},
  pages     = {1621--1628},
  publisher = {{ACM}},
  year      = {2022},
  url       = {https://doi.org/10.1145/3519935.3520011},
  doi       = {10.1145/3519935.3520011},
  timestamp = {Tue, 14 Jun 2022 17:04:49 +0200},
  biburl    = {https://dblp.org/rec/conf/stoc/Cohen-AddadEMN22.bib},
  bibsource = {dblp computer science bibliography, https://dblp.org}
}

@inproceedings{CohenAddadFS19,
  author    = {Vincent Cohen{-}Addad and
               Andreas Emil Feldmann and
               David Saulpic},
  title     = {Near-linear Time Approximation Schemes for Clustering in Doubling
               Metrics},
  booktitle   = {J. {ACM}},
  volume    = {68},
  year      = {2021}
}

@article{megiddo1984complexity,
	author = {Megiddo, Nimrod and Supowit, Kenneth J},
	journal = {SIAM journal on computing},
	number = {1},
	pages = {182--196},
	publisher = {SIAM},
	title = {On the complexity of some common geometric location problems},
	volume = {13},
	year = {1984}
}

@inproceedings{ArV07,
  author    = {David Arthur and
               Sergei Vassilvitskii},
  title     = {k-means++: the advantages of careful seeding},
  booktitle = {Proceedings of the Eighteenth Annual {ACM-SIAM} Symposium on Discrete
               Algorithms, {SODA} 2007, New Orleans, Louisiana, USA, January 7-9,
               2007},
  pages     = {1027--1035},
  year      = {2007},
  url       = {http://dl.acm.org/citation.cfm?id=1283383.1283494},
  timestamp = {Fri, 07 Dec 2012 17:02:08 +0100},
  biburl    = {http://dblp.uni-trier.de/rec/bib/conf/soda/ArthurV07},
  bibsource = {dblp computer science bibliography, http://dblp.org}
}

@inproceedings{Rubinstein18,
  author       = {Aviad Rubinstein},
  editor       = {Ilias Diakonikolas and
                  David Kempe and
                  Monika Henzinger},
  title        = {Hardness of approximate nearest neighbor search},
  booktitle    = {Proceedings of the 50th Annual {ACM} {SIGACT} Symposium on Theory
                  of Computing, {STOC} 2018, Los Angeles, CA, USA, June 25-29, 2018},
  pages        = {1260--1268},
  publisher    = {{ACM}},
  year         = {2018},
  url          = {https://doi.org/10.1145/3188745.3188916},
  doi          = {10.1145/3188745.3188916},
  timestamp    = {Wed, 21 Nov 2018 12:44:02 +0100},
  biburl       = {https://dblp.org/rec/conf/stoc/Rubinstein18.bib},
  bibsource    = {dblp computer science bibliography, https://dblp.org}
}

@inproceedings{AndoniR15,
  author       = {Alexandr Andoni and
                  Ilya P. Razenshteyn},
  editor       = {Rocco A. Servedio and
                  Ronitt Rubinfeld},
  title        = {Optimal Data-Dependent Hashing for Approximate Near Neighbors},
  booktitle    = {Proceedings of the Forty-Seventh Annual {ACM} on Symposium on Theory
                  of Computing, {STOC} 2015, Portland, OR, USA, June 14-17, 2015},
  pages        = {793--801},
  publisher    = {{ACM}},
  year         = {2015},
  url          = {https://doi.org/10.1145/2746539.2746553},
  doi          = {10.1145/2746539.2746553},
  timestamp    = {Tue, 06 Nov 2018 11:07:04 +0100},
  biburl       = {https://dblp.org/rec/conf/stoc/AndoniR15.bib},
  bibsource    = {dblp computer science bibliography, https://dblp.org}
}

@inproceedings{MakarychevMR19,
  author    = {Konstantin Makarychev and
               Yury Makarychev and
               Ilya P. Razenshteyn},
  title     = {Performance of Johnson-Lindenstrauss transform for \emph{k}-means
               and \emph{k}-medians clustering},
  booktitle = {Proceedings of the 51st Annual {ACM} {SIGACT} Symposium on Theory
               of Computing, {STOC} 2019, Phoenix, AZ, USA, June 23-26, 2019},
  pages     = {1027--1038},
  year      = {2019},
  xcrossref  = {DBLP:conf/stoc/2019},
  url       = {https://doi.org/10.1145/3313276.3316350},
  doi       = {10.1145/3313276.3316350},
  timestamp = {Sat, 22 Jun 2019 17:53:00 +0200},
  biburl    = {https://dblp.org/rec/conf/stoc/MakarychevMR19.bib},
  bibsource = {dblp computer science bibliography, https://dblp.org}
}

@misc{bhattacharya2025,
      title={Fully Dynamic Euclidean k-Means}, 
      author={Sayan Bhattacharya and Martín Costa and Ermiya Farokhnejad and Shaofeng H. -C. Jiang and Yaonan Jin and Jianing Lou},
      year={2025},
      eprint={2507.11256},
      archivePrefix={arXiv},
      primaryClass={cs.DS},
      url={https://arxiv.org/abs/2507.11256}, 
}

@inproceedings{Cohen-AddadLNSS20,
  author       = {Vincent Cohen{-}Addad and
                  Silvio Lattanzi and
                  Ashkan Norouzi{-}Fard and
                  Christian Sohler and
                  Ola Svensson},
  editor       = {Hugo Larochelle and
                  Marc'Aurelio Ranzato and
                  Raia Hadsell and
                  Maria{-}Florina Balcan and
                  Hsuan{-}Tien Lin},
  title        = {Fast and Accurate $k$-means++ via Rejection Sampling},
  booktitle    = {Advances in Neural Information Processing Systems 33: Annual Conference
                  on Neural Information Processing Systems 2020, NeurIPS 2020, December
                  6-12, 2020, virtual},
  year         = {2020},
  url          = {https://proceedings.neurips.cc/paper/2020/hash/babcff88f8be8c4795bd6f0f8cccca61-Abstract.html},
  timestamp    = {Mon, 26 Jun 2023 20:41:57 +0200},
  biburl       = {https://dblp.org/rec/conf/nips/Cohen-AddadLNSS20.bib},
  bibsource    = {dblp computer science bibliography, https://dblp.org}
}

@inproceedings{stoc21,
  author       = {Vincent Cohen{-}Addad and
                  David Saulpic and
                  Chris Schwiegelshohn},
  editor       = {Samir Khuller and
                  Virginia Vassilevska Williams},
  title        = {A new coreset framework for clustering},
  booktitle    = {{STOC} '21: 53rd Annual {ACM} {SIGACT} Symposium on Theory of Computing,
                  Virtual Event, Italy, June 21-25, 2021},
  pages        = {169--182},
  publisher    = {{ACM}},
  year         = {2021},
  url          = {https://doi.org/10.1145/3406325.3451022},
  doi          = {10.1145/3406325.3451022},
  timestamp    = {Sat, 08 Jan 2022 02:24:27 +0100},
  biburl       = {https://dblp.org/rec/conf/stoc/Cohen-AddadSS21.bib},
  bibsource    = {dblp computer science bibliography, https://dblp.org}
}

@article{CHROBAK2011594,
title = {Better bounds for incremental medians},
journal = {Theoretical Computer Science},
volume = {412},
number = {7},
pages = {594-601},
year = {2011},
note = {Selected papers from WAOA 2007: Fifth Workshop on Approximation and Online Algorithms},
issn = {0304-3975},
doi = {https://doi.org/10.1016/j.tcs.2009.07.006},
url = {https://www.sciencedirect.com/science/article/pii/S0304397509004563},
author = {Marek Chrobak and Mathilde Hurand},
keywords = {Incremental medians, Approximation algorithms, Online algorithms, Analysis of algorithms},
abstract = {In the incremental version of the well-known k-medianproblem, the objective is to compute an incremental sequence of facility sets F1⊆F2⊆⋯⊆Fn, where each Fk contains at most k facilities. We say that this incremental medians sequence is R-competitive if the cost of each Fk is at most R times the optimum cost of k facilities. The smallest such R is called the competitive ratio of the sequence {Fk}. Mettu and Plaxton [Ramgopal R. Mettu, C. Greg Plaxton, The online median problem, in: Proc. 41st Symposium on Foundations of Computer Science, FOCS, IEEE, 2000, pp. 339–348; Ramgopal R. Mettu, C. Greg Plaxton, The online median problem, SIAM Journal on Computing 32 (3) (2003) 816–832] presented a polynomial-time algorithm that computes an incremental sequence with competitive ratio ≈30. They also showed a lower bound of 2. The upper bound on the ratio was improved to 8 in [Guolong Lin, Chandrashekha Nagarajan, Rajmohan Rajamaran, David P. Williamson, A general approach for incremental approximation and hierarchical clustering, in: Proc. 17th Symposium on Discrete Algorithms, SODA, 2006, pp. 1147–1156] and [Marek Chrobak, Claire Kenyon, John Noga, Neal Young, Online medians via online bidding, in: Proc. 7th Latin American Theoretical Informatics Symposium, LATIN, in: Lecture Notes in Computer Science, vol. 3887, 2006, pp. 311–322]. We improve both bounds in this paper. We first show that no incremental sequence can have competitive ratio better than 2.01 and we give a probabilistic construction of a sequence whose competitive ratio is at most 2+42≈7.656. We also propose a new approach to the problem that for instances that we refer to as equable achieves an optimal ratio of 2.}
}

@article{MPOnlineMedian,
author = {Mettu, Ramgopal R. and Plaxton, C. Greg},
title = {The Online Median Problem},
journal = {SIAM Journal on Computing},
volume = {32},
number = {3},
pages = {816-832},
year = {2003},
doi = {10.1137/S0097539701383443},
URL = {  https://doi.org/10.1137/S0097539701383443}
}

@article{SHENMAIER2016312,
title = {An approximation algorithm for the Euclidean incremental median problem},
journal = {Discrete Optimization},
volume = {22},
pages = {312-327},
year = {2016},
issn = {1572-5286},
doi = {https://doi.org/10.1016/j.disopt.2016.08.005},
url = {https://www.sciencedirect.com/science/article/pii/S1572528616300573},
author = {Vladimir Shenmaier},
keywords = {Incremental medians, Euclidean medians, Hierarchical clustering, Approximation algorithm, Online algorithm},
abstract = {In the incremental version of the k-median problem, we find a sequence of facility sets F1⊆F2⊆⋯⊆Fn, where each Fk contains at most k facilities. This sequence is said to be δ-competitive if the cost of each Fk is at most δ times the optimum cost of k facilities. The best deterministic (randomized) algorithm available for the metric space has a competitive ratio of 8 (7.656). The best one for the one-dimensional problem finds a 5.828-competitive sequence. We give a 7.076-competitive solution for the high-dimensional Euclidean space.}
}

\appendix
\section{Proof of Correctness}\label{appendix:correctness}

The goal of this section is to prove Theorem~\ref{thm:correctness}. We start with a simple property of the function $\nval$.
\begin{lemma}\label{lem:nvaldecreasing}
For any point $x \in P$ the function $\ell \mapsto \nval(B(x,\Delta/(2c)^\ell))$ is decreasing.
\end{lemma}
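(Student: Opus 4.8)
The statement to prove is that for each fixed point $p \in P$, the function $\ell \mapsto \nval(B(p, \Delta/(2c)^\ell))$ is decreasing. Recall that $\nval(B(x,R)) = |P \cap B(x,R)| \cdot R^z$ in the idealized version, but here $\nval$ is the \emph{approximate} value computed by \Cref{lem:computeVal}, which only guarantees $R^z \cdot |B(p,R)\cap P|/3 \leq \nval(B(p,R)) \leq 3R^z \cdot |B(p,c\cdot R)\cap P|$. So monotonicity cannot come for free from the exact formula; the question is what notion of $\nval$ the lemma refers to.

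The plan is the following. First I would check whether the lemma is about the \emph{exact} quantity $|P\cap B(p,R)|\cdot R^z$ or the \emph{estimated} one. If it is the exact quantity, the proof is immediate: going from radius $R = \Delta/(2c)^\ell$ to radius $R/(2c) = \Delta/(2c)^{\ell+1}$, the count $|P\cap B(p,R)|$ is non-increasing (a smaller ball contains no more points), and the factor $R^z$ shrinks by $(2c)^z > 1$, so the product strictly decreases — provided the ball is nonempty, which it always is since $p\in B(p,R)\cap P$. That handles the exact case in one line. If instead the lemma refers to the estimated $\nval$, then strict monotonicity of the raw estimates is false in general, and the right reading is that the algorithm should \emph{enforce} monotonicity: after computing all the raw estimates via \Cref{lem:computeVal}, redefine $\nval(B(p,\Delta/(2c)^\ell))$ to be $\min_{\ell' \le \ell} \nval_{\mathrm{raw}}(B(p,\Delta/(2c)^{\ell'}))$ — i.e., take a running minimum over the radii from largest to smallest. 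This postprocessing is trivial to do in linear time, it makes the function decreasing by construction, and it only \emph{decreases} each value, so the upper-bound half of the \Cref{lem:computeVal} guarantee is preserved for free. For the lower bound: the new value at level $\ell$ equals some raw value at level $\ell' \le \ell$, which is at least $(\Delta/(2c)^{\ell'})^z \cdot |B(p,\Delta/(2c)^{\ell'})\cap P|/3 \ge (\Delta/(2c)^{\ell})^z \cdot |B(p,\Delta/(2c)^{\ell})\cap P|/3$ since both the radius and the count are non-increasing in $\ell'$ — so the lower bound of \Cref{lem:computeVal} still holds at level $\ell$ with the same constant. Hence the monotone version satisfies the exact same two-sided guarantee as \Cref{lem:computeVal}, and from now on we may assume $\nval$ is this monotone version.

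Concretely, I would write: for any $\ell$, the ball $B(p,\Delta/(2c)^{\ell+1})$ is contained in $B(p,\Delta/(2c)^{\ell})$, so $|B(p,\Delta/(2c)^{\ell+1})\cap P| \le |B(p,\Delta/(2c)^{\ell})\cap P|$; in the exact model multiplying by the radius to the $z$ and using $(2c)^z>1$ gives $\nval(B(p,\Delta/(2c)^{\ell+1})) < \nval(B(p,\Delta/(2c)^{\ell}))$, and in the estimated model the running-minimum definition gives $\nval(B(p,\Delta/(2c)^{\ell+1})) \le \nval(B(p,\Delta/(2c)^{\ell}))$ by definition. Either way the function is (weakly or strictly) decreasing, as claimed.

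I do not expect any genuine obstacle here; this is a bookkeeping lemma whose only subtlety is pinning down which $\nval$ is meant. The one thing to be careful about is that, if the estimated $\nval$ is intended, the running-minimum postprocessing must be performed \emph{before} any reference to this lemma and must be shown to preserve the guarantees of \Cref{lem:computeVal} (which, as sketched above, it does, because taking a pointwise minimum over a set of valid upper estimates is still a valid upper estimate, and over a set of valid lower estimates — combined with monotonicity of the true counts and radii in $\ell$ — remains a valid lower estimate at the larger index). I would state this explicitly in a one-sentence remark so that the rest of the paper can safely treat $\nval$ as monotone in $\ell$ for every fixed $p$.
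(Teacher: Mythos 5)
Your proof does not actually establish the lemma as stated; it instead proposes to modify the algorithm so that a related lemma would hold. The $\nval$ in the paper is precisely the estimate computed by \Cref{lem:computeVal} and used as-is by \Cref{alg:main}, so replacing it with a running minimum would change the algorithm and would require you to re-verify that the two-sided guarantee of \Cref{lem:computeVal} (and everything downstream) survives the replacement. You sketch that argument, and it would work, but it is not a proof of the lemma in front of you.

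More importantly, you miss the quantitative observation that makes the lemma true \emph{without} any postprocessing. Your claim that ``strict monotonicity of the raw estimates is false in general'' is true for a generic two-sided $3$-approximation, but it is false here because of the specific scale gap. The radii compared in the lemma differ by a factor of $2c \geq 10$ (and $z \geq 1$), while the slack in \Cref{lem:computeVal} is a factor of $3$ on each side, for a total of $9$. Concretely, with $R = \Delta/(2c)^\ell$ and $R' = R/(2c)$:
\begin{align*}
\nval(B(p,R)) &\geq \frac{R^z\,|B(p,R)\cap P|}{3}
= \frac{(2c)^z}{9}\cdot 3\,R'^z\,|B(p,R)\cap P|\\
&\geq \frac{(2c)^z}{9}\cdot 3\,R'^z\,|B(p,c R')\cap P|
\geq \frac{(2c)^z}{9}\cdot \nval(B(p,R'))
\geq \nval(B(p,R')),
\end{align*}
using the lower bound of \Cref{lem:computeVal} at radius $R$, the containment $B(p,cR')\subseteq B(p,R)$ (since $cR' \leq R$), the upper bound of \Cref{lem:computeVal} at radius $R'$, and finally $(2c)^z \geq 10 > 9$. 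This is exactly the paper's argument. So the raw estimates are already monotone, and the geometric radius decay beating the approximation factor is the one idea you needed; the running-minimum workaround, while valid as a fallback, hides rather than reveals why the lemma holds.
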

\begin{proof}
 Let $r,r'$ be two power of $2c$ such that $r \geq 2c\cdot r'$. We have: 
    \begin{align*}
    \nval(B(x, r)) &\geq r^z \cdot \frac{|B(x, r) |}{3} \\
    &= 3 \frac{r^z}{(2c)^z} \cdot |B(x, r)| \cdot \frac{(2c)^z}{9}\\
    &\geq 3 r'^z \cdot |B(x, c \cdot r')| \cdot \frac{(2c)^z}{9} \\
    &\geq \nval(B(x, c \cdot r')) \cdot \frac{(2c)^z}{9} \\
    &\geq \nval(B(x, r')).
\end{align*}
 The last inequality comes from $c \geq 5$ and $z \geq 1$.
\end{proof}

In what follows, we consider a fixed set of $k$ centers $\Gamma \subseteq P$. Our objective is to compare the cost of $C$ output by algorithm~\ref{alg:main} with the cost of $\Gamma$ and demonstrate that $\cost(P, C) \leq O(poly(c)) \cdot \cost(P, \Gamma)$. By setting $\Gamma$ as the optimal $(k,z)$-clustering solution, we can finalize our analysis. It is important to note that $\Gamma$ is restricted to be a subset of the input $P$. In the Euclidean setting, the centers of a solution are typically not required to be part of the input. However, it is well known that the optimal $(k,z)$-clustering, constrained to be a subset of the input, is a $2^z$-approximation of the optimal $(k,z)$-clustering that allows centers to be placed outside of the input points; thus this assumption make us lose a mere factor $O(2^z)$.

For $\gamma \in \Gamma$, let $P_\gamma$ be the cluster of $\gamma$, consisting of all points in $P$ assigned to $\gamma$ in $\Gamma$. We analyze the cost of each cluster independently as follows. We split $\Gamma$ into two parts: $\Gamma_0$ and $\Gamma_1$. $\Gamma_0$ is the set of $\gamma \in \Gamma$ such that no ball in 
 $\left\{ B\left(\gamma, \frac{\Delta}{(2c)^\ell}\right) \mid \ell \in \{0, \dots, \log_{2c}(\Delta) + 7\} \right\} $
is available at the end of the algorithm. Let $\Gamma_1 = \Gamma \setminus \Gamma_0$.

\paragraph{The easy case, dealing with $\Gamma_0$:}
The next lemma shows that if a ball $B\lpar x,\frac{1}{(2c)^7} \rpar$ is not available, then $x \in C$. This directly implies that centers of $\Gamma_0$ are also in $C$.
\begin{lemma}
    Let $x\in P$, if the ball $B\lpar x,\frac{1}{(2c)^7}\rpar $ is not available at the end of the algorithm, then there exists a center $c_j\in C$ such that $c_j = x$.
\end{lemma}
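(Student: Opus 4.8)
The plan is to show that if $B(p, 1/(2c)^7)$ is unavailable at the end of the algorithm, then it must have been made unavailable by some center $c_j$ chosen by the algorithm, and moreover that this forces $c_j = p$ because the minimum radius is so small that no other point can be close enough. Concretely, if $B(p, 1/(2c)^7)$ is unavailable then, by the removal rule on line 17, there is an iteration $j$ with $p \in N(c_j, 100c^4 \cdot \tfrac{1}{(2c)^7})$, so $\dist(p, c_j) \leq 100c^5 \cdot \tfrac{1}{(2c)^7} = \tfrac{100 c^5}{(2c)^7}$. Since $(2c)^7 = 128 c^7$, this bound is $\tfrac{100}{128 c^2} < 1$. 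But we have assumed the smallest pairwise distance in $P$ is at least $1$, and both $p$ and $c_j$ are input points; hence $\dist(p, c_j) < 1$ forces $p = c_j$.

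The one subtlety is that the radius $1/(2c)^7$ is the smallest radius in the family, so I should double-check that $c_j \in P$ (true, since every center is defined on line 13 as $x_j^\ell$, the center of a ball $B(x_j^\ell, \cdot)$ that the algorithm constructed, and all such $x$ come from $P$ via the LSH neighborhoods, which only return points of $P$) and that $c_j \in C_k$ (true if $j \leq k$; if the algorithm terminated early on line 20 and output $C_{j-1}$ with $j - 1 < k$, the statement still holds for that output, but as phrased for $C_k$ we use that the ball became unavailable at some iteration $\leq k$, hence $c_j \in C_k$). So the main step is just the arithmetic comparison $\tfrac{100 c^5}{(2c)^7} < 1$, which holds for all $c \geq 5$ (indeed for all $c \geq 1$), combined with the minimum-distance assumption.

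I expect no real obstacle here; the only thing to be careful about is bookkeeping the constant in the distance bound. When a ball $B(p, R)$ is removed on line 17 it is because $p \in N(c_j, 100c^4 \cdot R) \subseteq B(c_j, 100c^5 \cdot R)$ by \Cref{cor:neighborhood} (the neighborhood $N(\cdot, r)$ is contained in the ball of radius $c \cdot r$). Plugging $R = 1/(2c)^7$ gives $\dist(p, c_j) \leq 100 c^5 / (2c)^7 = 100/(128 c^2) < 1$, and then the assumption that distinct input points are at distance at least $1$ finishes it. The corollary that $\Gamma_0 \subseteq C_k$ is then immediate: for $\gamma \in \Gamma_0$, in particular the smallest-radius ball $B(\gamma, 1/(2c)^7)$ is unavailable, so $\gamma \in C_k$.
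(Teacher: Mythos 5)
Your proof is correct and follows essentially the same route as the paper: trace the removal of $B(p, 1/(2c)^7)$ back to some center $c_j$ via the rule on line~17, use the containment $N(c_j, 100c^4 R) \subseteq B(c_j, 100c^5 R)$ from \Cref{cor:neighborhood} to get $\dist(p, c_j) \leq 100/(128c^2) < 1$, and invoke the minimum-pairwise-distance assumption to conclude $p = c_j$. The extra bookkeeping you add (that every $c_j$ lies in $P$, and that the removal happened at some iteration $j \leq k$ so $c_j \in C_k$) is sensible due diligence that the paper leaves implicit.
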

\begin{proof}
We know that there exists a center $c_j \in C$ such that 
\[x \in N\left(c_j, 100c^4 \cdot \frac{1}{(2c)^7}\right) \subseteq B\left(c_j, c \cdot \frac{100c^4}{(2c)^7}\right) = B\left(c_j, \frac{100}{128 c^2}\right).\]
Therefore, $\dist(c_j, x) \leq \frac{100}{128 c^2} < 1$. However, both $c_j$ and $x$ are in $P$, and the minimum distance between two distinct input points is assumed to be $1$. Therefore, $c_j = p$.
\end{proof}

\begin{corollary}\label{lem:gamma0}
    $\Gamma_0 \subseteq C$. In particular for all $\gamma \in \Gamma_0$, we have $\cost(P_\gamma,C) \leq \cost(P_\gamma, \Gamma)$.
\end{corollary}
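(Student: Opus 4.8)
The plan is to derive this statement directly from the preceding lemma (the one asserting that if $B\lpar p, 1/(2c)^7 \rpar$ is unavailable at the end then $p \in C_k$). First I would unwind the definition of $\Gamma_0$: for every $\gamma \in \Gamma_0$, none of the balls $B\lpar \gamma, \Delta/(2c)^\ell \rpar$ with $\ell \in \{0, \dots, \log_{2c}(\Delta)+7\}$ is available at termination. Taking the extreme index $\ell = \log_{2c}(\Delta)+7$ and using $\Delta = (2c)^{\log_{2c}\Delta}$, the radius is exactly $\Delta/(2c)^{\log_{2c}(\Delta)+7} = 1/(2c)^7$, so the ball $B\lpar \gamma, 1/(2c)^7 \rpar$ is in particular not available. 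The preceding lemma then produces a center $c_j \in C_k$ with $c_j = \gamma$, hence $\gamma \in C_k$. Since $\gamma$ was arbitrary in $\Gamma_0$, this gives $\Gamma_0 \subseteq C_k$.

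For the cost inequality, fix $\gamma \in \Gamma_0$ and any point $p \in P_\gamma$. By definition of the cluster $P_\gamma$, the point $p$ is assigned to $\gamma$ in the solution $\Gamma$, so $\dist(p,\Gamma) = \dist(p,\gamma)$; and since $\gamma \in C_k$ we have $\dist(p, C_k) \leq \dist(p, \gamma)$. Raising to the $z$-th power and summing over $p \in P_\gamma$ yields
\[
\cost(P_\gamma, C_k) = \sum_{p \in P_\gamma} \dist(p, C_k)^z \leq \sum_{p \in P_\gamma} \dist(p, \gamma)^z = \cost(P_\gamma, \Gamma),
\]
which is exactly the claimed bound.

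I do not anticipate any genuine obstacle: this is an immediate corollary. The only points requiring a moment of care are the index bookkeeping showing that the smallest radius appearing in the family equals $1/(2c)^7$ (so that the preceding lemma applies verbatim), and the observation that points of $P_\gamma$ pay precisely $\dist(\cdot,\gamma)^z$ in $\Gamma$ by the definition of the cluster assignment, so that the final equality $\sum_{p \in P_\gamma}\dist(p,\gamma)^z = \cost(P_\gamma,\Gamma)$ holds.
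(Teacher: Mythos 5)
Your proof is correct and fills in precisely the argument the paper leaves implicit: the definition of $\Gamma_0$ makes the smallest ball $B(\gamma, 1/(2c)^7)$ unavailable, the preceding lemma forces $\gamma \in C_k$, and the cost bound follows because each $p \in P_\gamma$ pays $\dist(p,\gamma)^z$ in $\Gamma$ while $\gamma$ is itself a candidate center in $C_k$. No gaps; this matches the paper's intended (unwritten) argument.
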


\begin{corollary}\label{cor:gamma0}
    If none of the balls are available at the end of the algorithm, $\cost(P,C) = 0$. 
\end{corollary}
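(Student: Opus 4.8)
The plan is to obtain this as an immediate consequence of the lemma stated just before \Cref{lem:gamma0} (the one asserting that if $B\lpar p, 1/(2c)^7\rpar$ is unavailable at the end of the algorithm then $p \in C_k$), applied simultaneously to every input point. First I would fix an arbitrary $p \in P$ and observe that the ball $B\lpar p, 1/(2c)^7\rpar$ belongs to the family of available balls defined in line~1 of \Cref{alg:main} (it is the ball with $\ell = \log_{2c}(\Delta)+7$). Under the hypothesis that no ball is available at the end, this particular ball is unavailable, so the lemma yields a center $c_j \in C_k$ with $c_j = p$; hence $p \in C_k$. Since $p$ was arbitrary, this shows $P \subseteq C_k$.

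Then the cost bound follows with no computation: for every $p \in P$ we have $p \in C_k$, so $\dist(p, C_k) = 0$, and therefore $\cost(P, C_k) = \sum_{p \in P} \dist(p, C_k)^z = 0$.

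I do not expect any genuine obstacle here; the proof is essentially a one-line application of the previous lemma, combined with the definition of the clustering cost. The only point worth being careful about is the reading of the hypothesis: ``none of the balls available'' must be understood with respect to the full family of balls introduced in line~1, which does contain $B\lpar p, 1/(2c)^7\rpar$ for every $p \in P$, so that the lemma's hypothesis is met for all input points at once. (If one wished, one could additionally note that in this situation the algorithm necessarily reached line~22 with $C_k \supseteq P$, which forces $|P| \le k$, but this observation is not needed for the statement.)
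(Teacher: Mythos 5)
Your proof is correct and is exactly the intended argument: the paper gives no explicit proof of \Cref{cor:gamma0}, but it follows directly from the preceding unnumbered lemma applied to every $p \in P$, which is precisely your reasoning. (A minor aside: your parenthetical remark that the algorithm ``necessarily reached line~22'' is not quite right --- the available balls could become exhausted only after the $k$-th iteration completes --- but as you note yourself, that observation is not used, so it does not affect the proof.)
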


We can specifically apply \Cref{cor:gamma0} if the algorithm terminates early at line 22 (i.e., before selecting $k$ centers) because none of the balls are available. For the remainder of the proof, we assume the algorithm terminates after selecting $k$ centers and we will therefore denote $C_k = \{c_1, \dots, c_k\}$ the output of the algorithm.

\paragraph{First step to bound the $\cost(P_\gamma,C_k)$ for $\gamma \in \Gamma_1$:}
The main task is to demonstrate that clusters in $\Gamma_1$ are also well approximated. For any center $\gamma \in \Gamma_1$, let $B(\gamma, r_\gamma)$ be the largest ball centered on $\gamma$ that remains available at the end of the algorithm. Such a ball exists by the definition of $\Gamma_1$. We divide the cluster $P_\gamma$ into two parts: $In(P_\gamma) := P_\gamma \cap B(\gamma, r_\gamma)$ and $Out(P_\gamma) := P_\gamma \setminus In(P_\gamma)$.

By the definition of $r_\gamma$, we know that there exists a center in $C_k$ ``not too far" from the ball $B(\gamma, r_\gamma)$ -- as otherwise, a larger ball would be available. This allows us to bound the cost of $Out(P_\gamma)$ in the clustering $C_k$. Furthermore, we can relate the cost of $In(P_\gamma)$ to the value of $B(\gamma, r_\gamma)$, as demonstrated in the following lemma.

 \begin{restatable}{lemma}{inAndOut}\label{lem:inAndout}
For all $\gamma \in \Gamma_1$ , we have:
    \begin{align}
        &\cost(In(P_\gamma),C_k)  \leq  2^{z-1}\lpar(200c^6)^z + 1 \rpar \cdot 3  \nval(B(\gamma,r_\gamma))\label{lem:In}\\
        &\cost(Out(P_\gamma), C_k) \leq 2^{z-1}((200c^6)^z+1) \cdot  \cost(Out(P_\gamma), \Gamma) \label{lem:Out}
    \end{align}
\end{restatable}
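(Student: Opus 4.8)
The plan is to bound the two parts of the cluster separately, using the key fact that the ball $B(\gamma, R_\gamma)$ is the largest ball centered at $\gamma$ that remains available at the end of the algorithm. The first observation is that the $(2c)$-times larger ball $B(\gamma, 2c \cdot R_\gamma)$ is \emph{not} available, so some center $c_j \in C_k$ must have forbidden it when it was selected; by the removal rule on line 17 this means $\gamma \in N(c_j, 100c^4 \cdot 2c \cdot R_\gamma)$, hence $\dist(\gamma, c_j) \leq c \cdot 200 c^5 \cdot R_\gamma = 200 c^6 \cdot R_\gamma$. Call this center $\gamma^*$; it will serve as the ``not too far'' center referenced in the text, and all distance bounds to $C_k$ will be obtained by routing through $\gamma^*$ and the triangle inequality.

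For \eqref{lem:In}: every point $p \in In(P_\gamma) = P_\gamma \cap B(\gamma, R_\gamma)$ satisfies $\dist(p, \gamma) \leq R_\gamma$, so $\dist(p, C_k) \leq \dist(p, \gamma^*) \leq \dist(p, \gamma) + \dist(\gamma, \gamma^*) \leq R_\gamma + 200 c^6 R_\gamma \leq (200c^6 + 1) R_\gamma$. Raising to the $z$-th power (and using $(a+b)^z \le 2^{z-1}(a^z+b^z)$ to match the stated constant, applied to $a = \dist(\gamma,\gamma^*)^z$-scale terms), summing over the $|In(P_\gamma)|$ points, and then invoking $\nval(B(\gamma,R_\gamma)) \ge R_\gamma^z |B(\gamma,R_\gamma)\cap P|/3 \ge R_\gamma^z |In(P_\gamma)|/3$ from \Cref{lem:computeVal} gives the bound. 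Concretely, $\cost(In(P_\gamma), C_k) \le |In(P_\gamma)| \cdot 2^{z-1}((200c^6)^z + 1) R_\gamma^z \le 2^{z-1}((200c^6)^z+1)\cdot 3\,\nval(B(\gamma,R_\gamma))$.

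For \eqref{lem:Out}: each point $p \in Out(P_\gamma)$ lies outside $B(\gamma, R_\gamma)$, so $\dist(p, \gamma) > R_\gamma$, and hence $\dist(\gamma, \gamma^*) \le 200c^6 R_\gamma < 200 c^6 \dist(p,\gamma)$. Then $\dist(p, C_k) \le \dist(p, \gamma^*) \le \dist(p,\gamma) + \dist(\gamma,\gamma^*) \le (1 + 200c^6)\dist(p,\gamma)$, and since $\dist(p, \Gamma) = \dist(p, \gamma)$ (as $p \in P_\gamma$), we get $\dist(p, C_k)^z \le 2^{z-1}(1 + (200c^6)^z)\dist(p,\Gamma)^z$ by the same power-mean inequality. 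Summing over $p \in Out(P_\gamma)$ yields \eqref{lem:Out}.

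The main obstacle is purely bookkeeping: pinning down the exact constant $200c^6$ from the removal radius $100c^4$, the $c$-factor slack in $N(\cdot)$ from \Cref{cor:neighborhood}, and the factor $2c$ separating $R_\gamma$ from the next-larger radius — and then checking that the $2^{z-1}(a^z+b^z)$ splitting is applied consistently so that the advertised constant $2^{z-1}((200c^6)^z + 1)$ (rather than something like $(200c^6+1)^z$) comes out. There is no conceptual difficulty beyond the triangle inequality and the two-sided bound on $\nval$; the care is entirely in matching the stated coefficients.
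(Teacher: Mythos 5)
Your proposal is correct and follows essentially the same route as the paper: identify the center $\gamma^*\in C_k$ with $\dist(\gamma,\gamma^*)\le 200c^6 R_\gamma$ via the non-availability of $B(\gamma,2c R_\gamma)$, route all distances through $\gamma^*$ by the triangle inequality, apply the power-mean inequality $(a+b)^z\le 2^{z-1}(a^z+b^z)$, and finish the $In$ bound with $\nval(B(\gamma,R_\gamma))\ge R_\gamma^z|B(\gamma,R_\gamma)\cap P|/3$. The constants and the split into $In$/$Out$ match the paper's proof exactly.
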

\begin{proof}
Fix a $\gamma \in \Gamma_1$.
For any $x\in P_\gamma$, we have 
\begin{equation*}
    \cost(x, C_k) = \dist(x,C_k)^z \leq  (\dist(\gamma,C_k) + \dist(\gamma,x))^z \leq 2^{z-1} (\dist(\gamma,C_k)^z + \dist(\gamma,x)^z)
\end{equation*}
Thus, the first step of the proof is to establish the existence of a center in $C_k$ at a distance of $O(r_\gamma)$ from $\gamma$. By the definition of $r_\gamma$, the ball $B(\gamma, 2c \cdot r_\gamma)$ is not available. Therefore, there is a point $c_j \in C_k$ such that $\gamma \in N(c_j, 200c^5 \cdot r_\gamma)$, and $\dist(\gamma, c_j) \leq 200c^6 \cdot r_\gamma$.

We can now bound the cost of $In(P_\gamma)$ and prove \Cref{lem:In}. If $x \in B(\gamma, r_\gamma)$, we have  $\dist(\gamma,x) \leq  r_\gamma$, and therefore $\cost(x, C_k) \leq 2^{z-1}\lpar(200c^6)^z + 1 \rpar\cdot r_\gamma^z $. Summing this inequality over all $x \in In(P_\gamma)$ yields
\[\cost(In(P_\gamma), C_k) \leq 2^{z-1}\lpar(200c^6)^z + 1 \rpar \cdot \sum_{x \in B(\gamma, r_\gamma)} r_\gamma^z \leq 2^{z-1}\lpar(200c^6)^z + 1 \rpar \cdot  3 \nval(B(p_\gamma, r_\gamma)).\]

We turn to $Out(P_\gamma)$. If $x$ is outside $B(\gamma, r_\gamma)$, we have $\dist(\gamma,x) \geq r_\gamma$. Hence
\begin{align*}
     \cost(x, C_k) &\leq 2^{z-1} ((200c^6 \cdot r_\gamma)^z + \dist(\gamma,x)^z) \\
     &\leq 2^{z-1} ((200c^6 \dist(\gamma,x))^z + \dist(\gamma,x)^z) \\
     &\leq 2^{z-1}((200c^6)^z+1)\cdot \dist(\gamma,x)^z\\
     &\leq 2^{z-1}((200c^6)^z+1)\cdot \cost(x,\Gamma)
\end{align*}
Summing this inequality over all $x\in Out(P_\gamma)$ we get \Cref{lem:Out}
\end{proof}

\Cref{lem:inAndout} shows that points in $Out(P_\gamma)$ have roughly the same cost in the solution $\Gamma$ as in $C_k$, up to a factor of $O(poly(c))$, and the cost of points in $In(P_\gamma)$ is bounded by $O(poly(c)) \cdot \sum_{\gamma \in \Gamma_1} \nval(B_\gamma)$. Therefore, we only need to bound this sum of values.

\subsection{Bounding the sum of values}

To do so, we start by showing a simple lemma that provides a lower bound for the cost of the balls that do not intersect $\Gamma$. We say that a ball $B(x,r)$ is \emph{covered} by $\Gamma$ if $B(x,2c\cdot r) \cap \Gamma \neq \emptyset$.

\begin{lemma}\label{lem:cover}
    If a ball $B(x,r)$ is not covered by $\Gamma$, then $\cost(B(x,c\cdot r) , \Gamma) \geq c^z/3 \cdot \nval(B(x,r))$.
\end{lemma}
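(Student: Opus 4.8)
\textbf{Proof plan for Lemma~\ref{lem:cover}.}
The plan is to show that \emph{every} point $p \in B(x, R) \cap P$ pays a cost of at least $(c \cdot R)^z$ in the solution $\Gamma$, and then sum this bound over all such points, relating the resulting quantity to $\nval(B(x,R))$ via the lower bound in \Cref{lem:computeVal}. First I would unpack the hypothesis: $B(x,R)$ not being covered by $\Gamma$ means $B(x, 2c \cdot R) \cap \Gamma = \emptyset$, i.e. $\dist(x, \gamma) > 2c \cdot R$ for every $\gamma \in \Gamma$.

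Next I would take any $p \in B(x, c \cdot R) \cap P$ and bound $\dist(p, \Gamma)$ from below. Let $\gamma^\star \in \Gamma$ be the closest center to $p$. By the triangle inequality, $\dist(x, \gamma^\star) \le \dist(x, p) + \dist(p, \gamma^\star) \le c \cdot R + \dist(p, \Gamma)$. Combining with $\dist(x, \gamma^\star) > 2c \cdot R$ gives $\dist(p, \Gamma) > 2c \cdot R - c \cdot R = c \cdot R$, hence $\cost(p, \Gamma) = \dist(p, \Gamma)^z > (c \cdot R)^z = c^z R^z$. Actually, to be safe about the strictness and the ``$c\cdot R$'' versus ``$R$'' in the statement, I would apply this to all $p \in B(x, c\cdot R) \cap P$, which in particular includes all $p \in B(x, R) \cap P$; since $B(x,R) \cap P \subseteq B(x, c\cdot R)\cap P$ for $c \ge 1$, I get in either reading $|B(x, R) \cap P| \le |B(x, c\cdot R)\cap P|$ points each costing at least $c^z R^z$.

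Then I would sum: $\cost(B(x, c\cdot R) \cap P, \Gamma) \ge \sum_{p \in B(x, c \cdot R)\cap P} c^z R^z = c^z R^z \cdot |B(x, c\cdot R) \cap P| \ge c^z R^z \cdot |B(x, R)\cap P|$. Finally, by the lower bound guarantee of \Cref{lem:computeVal}, $\nval(B(x, R)) \le 3 R^z \cdot |B(x, c\cdot R)\cap P|$, so $R^z \cdot |B(x, c\cdot R)\cap P| \ge \nval(B(x,R))/3$, which yields $\cost(B(x, c \cdot R)\cap P, \Gamma) \ge c^z/3 \cdot \nval(B(x,R))$, as claimed.

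I do not expect a genuine obstacle here — the only thing to be careful about is which radius appears where: the $\nval$ estimate compares $B(p,R)$ to the slightly inflated ball $B(p, c\cdot R)$, and the statement of the lemma already accounts for this by phrasing the cost over $B(x, c\cdot R) \cap P$ rather than $B(x,R) \cap P$. So the key step is simply matching the ``$c\cdot R$'' in the covering condition against the ``$c\cdot R$'' in the ball over which the cost is measured, and checking the constants line up with the factor-$3$ slack from \Cref{lem:computeVal}.
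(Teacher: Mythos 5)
Your proposal is correct and follows essentially the same route as the paper: unpack the covering condition, apply the triangle inequality to show each $p \in B(x, c\cdot R)\cap P$ pays at least $(cR)^z$ in $\Gamma$, sum over the ball, and close with the upper bound $\nval(B(x,R)) \leq 3R^z |B(x, c\cdot R)\cap P|$ from \Cref{lem:computeVal}. The brief digression about $B(x,R)$ versus $B(x, c\cdot R)$ is unnecessary but harmless, since both the sum and the $\nval$ bound are anchored to $|B(x, c\cdot R)\cap P|$ exactly as in the paper.
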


\begin{proof}
    Consider $B(x, r)$, a ball not covered by $\Gamma$. Here, $\dist(x, \Gamma) \ge 2c\cdot r$. For any $p \in B(x,c\cdot r)$, the triangle inequality implies $\dist(p, \Gamma) \ge \dist(x,\Gamma) - \dist(x,p) \geq 2c\cdot r - c\cdot r = c\cdot r$. Raising both sides to the power of $z$ and summing for all $p\in B(x,c\cdot r)$, we get $\cost(B(x,c\cdot r), \Gamma) = \sum_{p\in B(x,c\cdot r)} \dist(p,\Gamma)^z \geq \sum_{p\in  B(x,c\cdot r)} (c\cdot r)^z \geq c^z \cdot r^z \cdot |B(x,c\cdot r)| \geq  c^z/3 \cdot \nval(B(x,r))$.
\end{proof}

\paragraph{From $In(P_\gamma)$ to uncovered balls:}

The strategy for bounding the sum of values $\sum_{\gamma \in \Gamma_1} \nval(B(\gamma, r_\gamma))$ relies on the preceding lemma. Our objective is to match each $B(\gamma, r_\gamma)$ (for $\gamma \in \Gamma_1$) with a ball $B(\phi(\gamma), r_{\phi(\gamma)})$ that satisfies the following properties: the balls $B(\phi(\gamma), r_{\phi(\gamma)})$
\begin{enumerate}
    \item are uncovered,
    \item have at least the same value as the balls $B(\gamma, r_\gamma)$, and
    \item the balls $B(\phi(\gamma), c\cdot r_{\phi(\gamma)})$ are pairwise disjoint.
\end{enumerate}

Consequently, due to property 2, we can upper bound $\sum_{\gamma \in \Gamma_1} \nval(B(\gamma, r_\gamma))$ by the sum of the values of the matched balls. According to property 1, this sum of values is at most the cost of the points in the ball in the solution $\Gamma$, as established in \Cref{lem:cover}. Additionally, property 3 ensures that there is no double counting, making this sum at most the cost of the entire dataset in the solution $\Gamma$.

In order to build this matching, the first step is to find $k$ balls that satisfy properties 2 and 3. To achieve this, we rely on the greedy choices made by the algorithm. 

For each $i = 1, \ldots, k$, let $(B(x_i^\ell, r_i^\ell))_\ell$ denote the sequence of balls $\sigma_i$, selected by the algorithm in the $i$-th loop.

Consider the balls $B(x_i^1, r_i^1)$ for $i = 1, \ldots, k$: each of these balls is chosen as the ball currently available with the maximum value. Therefore, they all have a value larger than that of $B(\gamma, r_\gamma)$, as this ball is still available at the end of the algorithm, thus satisfying property 3.

However, these balls may be too close to each other, and property 2 may not be satisfied. The idea is that if two balls $B(x_i^1, c \cdot r_i^1)$ and $B(x_{i'}^1, c \cdot r_{i'}^1)$ intersect (with $i' > i$), then we can preserve property 2 while reducing the diameter of one of the balls by considering $B(x_i^2, c \cdot r_i^2)$ instead. This approach is formalized and generalized in the next lemma:

\begin{lemma}\label{lem:pruningOneStep}
For every $i,i',\ell,\ell'$ such that $i<i'$ and $B(x_i^{\ell}, 2c\cdot r_i^{\ell}) \cap B(x_{i'}^{\ell'}, 2c\cdot r_{i'}^{\ell'}) \neq \emptyset$, it holds that:
\begin{itemize}
    \item $r_i^{\ell} \geq 4c^{2} \cdot   r_{i'}^{1} $
    \item $\nval(B(x_i^{\ell+1},r_i^{\ell+1})) \geq \nval(B(x_{i'}^1,r_{i'}^{1}))$.
\end{itemize}
\end{lemma}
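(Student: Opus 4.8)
The statement has two parts; the first is a structural claim about radii, and the second follows from it via the greedy rule. I would prove them in that order. For the radius bound, the hypothesis is that some ancestor ball $B(x_i^\ell, 2c R_i^\ell)$ in the $i$-th sequence intersects, when doubled, the top ball $B(x_{i'}^{\ell'}, 2c R_{i'}^{\ell'})$ of a later sequence $i' > i$ (with $\ell'$ arbitrary, but note $R_{i'}^{\ell'} \le R_{i'}^1$ since radii in a sequence only shrink). The key point is that the center $c_{i'}$ was selected \emph{after} $c_i$, so at the time $c_i$ was chosen the whole sequence for $i'$ did not yet exist, but more usefully: when $c_i$ was selected, the algorithm removed from the available set all balls $B(p,R)$ with $p \in N(c_i, 100 c^4 R)$. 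I would argue that if $R_i^\ell$ were \emph{too small} relative to $R_{i'}^1$, then $B(x_{i'}^1, R_{i'}^1)$ — which must still be available when iteration $i'$ starts, since it is the maximal-value available ball chosen there — would in fact have been forbidden by $c_i$. Concretely, combine: (a) $\dist(x_i^\ell, c_i) \le 20 c^2 R_i^\ell$ from \Cref{lem:distDesc} and \Cref{cor:allChildrenForbidden}-style reasoning (the algorithm forces $B(c_i, 1/(2c)^7)$ to be a descendant of $B(x_i^\ell, R_i^\ell)$); (b) the intersection hypothesis, which gives $\dist(x_i^\ell, x_{i'}^{\ell'}) \le 2c R_i^\ell + 2c R_{i'}^{\ell'}$; (c) a bound $\dist(x_{i'}^{\ell'}, x_{i'}^1) \le 20 c^2 R_{i'}^1$ again from \Cref{lem:distDesc}. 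Chaining these by the triangle inequality bounds $\dist(x_{i'}^1, c_i)$ by something like $O(c^2)(R_i^\ell + R_{i'}^1)$; if $R_i^\ell < 4c^2 R_{i'}^1$ then this whole quantity is $O(c^4) R_{i'}^1$, and I would check the constants work out so that $\dist(x_{i'}^1, c_i) \le 100 c^4 R_{i'}^1$, i.e. $x_{i'}^1 \in N(c_i, 100 c^4 R_{i'}^1)$, so $B(x_{i'}^1, R_{i'}^1)$ was removed when $c_i$ was picked — contradicting its availability at iteration $i'$. Hence $R_i^\ell \ge 4c^2 R_{i'}^1$.

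\textbf{Second bullet.} Given $R_i^\ell \ge 4c^2 R_{i'}^1 \ge 2c \cdot R_{i'}^1$ (using $c \ge 5$), I want to show $B(x_{i'}^1, R_{i'}^1)$ is one of the candidate balls the algorithm could have picked as $B(x_i^{\ell+1}, R_i^{\ell+1})$, i.e. it has radius $R_i^\ell / 2c$ and center in $N(x_i^\ell, 10 c R_i^\ell)$. The radius will not match exactly — $R_{i'}^1$ could be much smaller than $R_i^\ell/2c$ — so instead I would use \Cref{lem:nvaldecreasing}: the value $\nval(B(x_{i'}^1, \cdot))$ is a decreasing function of the radius index, so $\nval(B(x_{i'}^1, R_i^\ell/2c)) \ge \nval(B(x_{i'}^1, R_{i'}^1))$ (the larger radius $R_i^\ell / 2c \ge R_{i'}^1$ gives larger value). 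Wait — I need the reverse: I want a ball \emph{with} radius $R_i^\ell/2c$ whose value dominates that of $B(x_{i'}^1, R_{i'}^1)$. Since $R_i^\ell / 2c \ge R_{i'}^1$ and $\nval$ is decreasing in the radius index (i.e. decreasing as radius shrinks — re-read \Cref{lem:nvaldecreasing}: $\ell \mapsto \nval(B(p, \Delta/(2c)^\ell))$ is decreasing, so \emph{larger} $\ell$ means \emph{smaller} radius means \emph{smaller} value; hence larger radius means larger value). So $\nval(B(x_{i'}^1, R_i^\ell/2c)) \ge \nval(B(x_{i'}^1, R_{i'}^1))$. Then I must verify $x_{i'}^1 \in N(x_i^\ell, 10 c R_i^\ell)$: from the intersection hypothesis $\dist(x_i^\ell, x_{i'}^{\ell'}) \le 2c R_i^\ell + 2c R_{i'}^{\ell'}$ and $\dist(x_{i'}^{\ell'}, x_{i'}^1) \le 20 c^2 R_{i'}^1$, with $R_{i'}^1 \le R_i^\ell/(4c^2)$, this gives $\dist(x_i^\ell, x_{i'}^1) \le 2c R_i^\ell + 2c \cdot \tfrac{R_i^\ell}{4c^2} + 20c^2 \cdot \tfrac{R_i^\ell}{4c^2} \le 10 c R_i^\ell$ for $c \ge 5$ — so indeed $x_{i'}^1 \in B(x_i^\ell, 10c R_i^\ell) \subseteq N(x_i^\ell, 10 c R_i^\ell)$. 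Therefore $B(x_{i'}^1, R_i^\ell/2c)$ is in the candidate set from which $B(x_i^{\ell+1}, R_i^{\ell+1})$ was chosen to maximize $\nval$, so $\nval(B(x_i^{\ell+1}, R_i^{\ell+1})) \ge \nval(B(x_{i'}^1, R_i^\ell/2c)) \ge \nval(B(x_{i'}^1, R_{i'}^1))$.

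\textbf{Main obstacle.} The delicate part is the contradiction argument in the first bullet: I must make sure I correctly use the fact that $B(x_{i'}^1, R_{i'}^1)$ is available at the \emph{start} of iteration $i'$ (not at the end of the algorithm), and that $c_i$ with $i < i'$ was already selected by then, so the removals triggered by $c_i$ have already taken effect. I also need to be careful that the ball $B(x_i^\ell, R_i^\ell)$ being doubled by $2c$ in the hypothesis is consistent with the $10c$ and $100c^4$ expansion factors baked into the algorithm's $N(\cdot)$ calls and with the $c$-approximate nature of $N$ (so $N(c_i, 100c^4 R) \subseteq B(c_i, 100 c^5 R)$, and $B(c_i, 100c^4 R) \subseteq N(c_i, 100c^4 R)$). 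Tracking all these constants so that the chain lands inside $100 c^4 R_{i'}^1$ with room to spare — using $c \ge 5$ to absorb lower-order terms — is the real work; conceptually everything reduces to triangle inequalities plus \Cref{lem:distDesc}, \Cref{lem:nvaldecreasing}, and \Cref{lem:childAvailable}. I would also double-check the edge case where $B(x_i^\ell, R_i^\ell)$ is itself the last ball of its sequence (so $\ell+1$ is undefined) — but then $R_i^\ell = 1/(2c)^7$ is the minimum radius, and the first bullet would force $R_{i'}^1 \le R_i^\ell/(4c^2) < 1/(2c)^7$, impossible, so this case cannot arise and the statement is vacuous there; I would note this explicitly to keep the second bullet well-defined.
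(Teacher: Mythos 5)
Your proof is correct and follows the paper's argument essentially step for step: the first bullet by contradiction, chaining $\dist(c_i, x_i^\ell)$, $\dist(x_i^\ell, x_{i'}^{\ell'})$, and $\dist(x_{i'}^{\ell'}, x_{i'}^1)$ via triangle inequality to place $x_{i'}^1$ inside $N(c_i, 100c^4 R_{i'}^1)$ and contradict availability; the second bullet by showing $x_{i'}^1 \in N(x_i^\ell, 10c R_i^\ell)$ so that $B(x_{i'}^1, R_i^\ell/2c)$ is a candidate at the greedy step, then invoking \Cref{lem:nvaldecreasing}. The explicit handling of the edge case where $\ell$ is the last index of sequence $i$ (showing the hypothesis is then vacuous, so $\ell+1$ need not exist) is a small but genuine improvement on the paper's exposition, which leaves this implicit.
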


\begin{proof}Let $i, i', \ell, \ell'$ be such that $i < i'$ and $B(x_i^{\ell}, 2c \cdot r_i^{\ell}) \cap B(x_{i'}^{\ell'}, 2c \cdot r_{i'}^{\ell'}) \neq \emptyset$. We start by proving the first point by contradiction: suppose that $r_i^{\ell} < 4c^{2} \cdot r_{i'}^{1}$. We then bound the distance between $x_{i'}^1$ and $c_i$ to show that $B(x_{i'}^1, r_{i'}^1)$ became unavailable when $c_i$ was selected, contradicting the fact that it was later selected by the algorithm.

Since $B(x_i^{\ell}, 2c \cdot r_i^{\ell}) \cap B(x_{i'}^{\ell'}, 2c \cdot r_{i'}^{\ell'}) \neq \emptyset$, we have 

\[\dist(x_i^{\ell}, x_{i'}^{\ell'}) \leq 2c \cdot r_i^{\ell} + 2c \cdot r_{i'}^{\ell'} \leq 8c^{3} \cdot r_{i'}^{1} + 2c \cdot r_{i'}^{1} = (8c^3 + 2c) \cdot r_{i'}^{1}.
\]

Moreover, applying \Cref{lem:distDesc} twice, we get 
\[
\dist(c_i, x_i^{\ell}) \leq 20c^2 \cdot r_i^{\ell} \leq 80c^4 \cdot r_{i'}^{1}
\]
and 
\[
\dist(x_{i'}^{\ell'}, x_{i'}^{1}) \leq 20c^2 \cdot r_{i'}^{1}.
\]

Combining these three inequalities using the triangle inequality, we obtain:

\begin{align*}
    \dist(c_i, x_{i'}^1) &\leq \dist(c_i, x_i^{\ell}) + \dist(x_i^{\ell}, x_{i'}^{\ell'}) + \dist(x_{i'}^{\ell'}, x_{i'}^{1}) \\
    &\leq 80c^4 \cdot r_{i'}^{1} + (8c^3 + 2c) \cdot r_{i'}^{1} + 20c^2 \cdot r_{i'}^{1} \\
    &\leq (80c^4 + 8c^3 + 20c^2 + 2c) \cdot r_{i'}^{1} \\
    &\leq 100c^4 \cdot r_{i'}^{1}.
\end{align*}

The last step follows from $c \geq 5$. Therefore, we have $x_{i'}^1 \in B(c_i, 100c^4 \cdot r_{i'}^1) \subseteq N(c_i, 100c^4 \cdot r_{i'}^1)$, and $B(x_{i'}^1, r_{i'}^1)$ is removed from the available balls after $c_i$ is selected, contradicting the fact that it was later picked by the algorithm.

We now turn to the second point. The inequality $r_i^{\ell} \geq 4c^{2} \cdot r_{i'}^{1}$ leads to $\dist(x_i^{\ell}, x_{i'}^{\ell'}) \leq 2c\cdot r_i^{\ell} + 2c\cdot r_{i'}^{\ell'} \leq 2c\cdot r_i^{\ell} + 2c\cdot r_{i'}^{1} \leq \left(2c + \frac{1}{2c^2}\right) r_i^{\ell}$. On the other hand, reusing the inequality given by \Cref{lem:distDesc}, we have $\dist(x_{i'}^{\ell'}, x_{i'}^{1}) \leq 20c^2 \cdot r_{i'}^{1} \leq 5\cdot  r_i^{\ell}$. Hence, using the triangle inequality, we get:

\begin{align*}
    \dist(x_i^{\ell}, x_{i'}^{1}) &\leq \dist(x_i^{\ell}, x_{i'}^{\ell'}) + \dist(x_{i'}^{\ell'}, x_{i'}^{1}) \\
    &\leq \left(2c + \frac{1}{2c^2}\right)\cdot  r_i^{\ell} + 5 \cdot r_i^{\ell} \\
    &\leq \left(2c + \frac{1}{2c^2} + 5\right)\cdot  r_i^{\ell}\\
    &\leq 10c \cdot r_i^\ell.
\end{align*}
The last step follows from $c \geq 5$. Therefore, we have $x_{i'}^{1} \in B(x_i^\ell, 10c \cdot r_i^\ell) \subseteq N(x_i^\ell, 10c \cdot r_i^\ell)$, and $B(x_{i'}^{1}, r_i^\ell / 2c)$ could have been selected by the algorithm instead of $B(x_i^{\ell+1}, r_i^{\ell+1})$. Hence,
\begin{align*}
    \nval(B(x_i^{\ell+1}, r_i^{\ell+1})) &\geq \nval(B(x_{i'}^{1}, r_i^\ell / 2c)) \\
    &\geq \nval(B(x_{i'}^1, r_{i'}^{1})).
\end{align*}

The last inequality comes from $r_i^{\ell} \geq 4c^{2} \cdot r_{i'}^{1}$ and \Cref{lem:nvaldecreasing}.
\end{proof}

\paragraph{Pruning the sequences:}

Let $M$ be the maximum value of balls that are still available at the end of the algorithm (if no ball is still available at the end of the algorithm, we can directly conclude with \Cref{cor:gamma0}). By definition, we have $\nval\lpar B(\gamma, r_\gamma)\rpar \leq M, \forall \gamma \in \Gamma_1$.

The next step to define the matching is to show that we can \emph{prune} all the $k$ sequences $(x_i^1, x_i^2, \dots)$ to establish a separation property. The pruning procedure, based on \Cref{lem:pruningOneStep}, removes some balls from each sequence, ensuring that the value of the first remaining ball in each sequence is at least $M$, while also guaranteeing that the remaining balls are sufficiently far apart from each other. This is formalized in the following lemma.

\begin{lemma}\label{lem:pruning}
    There exists indices $\ell_1, ..., \ell_k$ such that:
    \begin{itemize}
        \item for all $i \in \{1,k\}$, $\nval(B(x_i^{\ell_i},r_i^{\ell_i})) \geq M$.
        \item For all $i,i' \in  \{1,\dots,k\}$, and for all $\ell \geq \ell_i$, $\ell'\geq \ell_{i'}$, $B(x_i^{\ell}, 2c\cdot r_i^{\ell}) \cap B(x_{i'}^{\ell'}, 2c\cdot r_{i'}^{\ell'}) = \emptyset$.
    \end{itemize}
\end{lemma}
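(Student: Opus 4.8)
\textbf{Proof plan for \Cref{lem:pruning}.}

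The plan is to define the pruning by a greedy/iterative process that walks down the sequences in order of index $i$, and to prove the two properties by a combination of \Cref{lem:pruningOneStep} and \Cref{lem:nvaldecreasing}. Concretely, I would process the indices $i = 1, 2, \dots, k$ one at a time, maintaining a current choice $\ell_i$ for each. Initialize each $\ell_i = 1$. Whenever I discover a ``bad pair'', i.e. indices $i < i'$ with $\ell \geq \ell_i$, $\ell' \geq \ell_{i'}$ and $B(x_i^\ell, 2c\cdot R_i^\ell) \cap B(x_{i'}^{\ell'}, 2c\cdot R_{i'}^{\ell'}) \neq \emptyset$, I advance the pointer of the \emph{smaller}-index sequence: replace $\ell_i$ by $\ell + 1$ (taking $\ell$ to be, say, the largest such $\ell$ for that $i$, or simply iterating until no bad pair involving $i$ as the smaller index remains). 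The key monotonicity fact that makes this terminate and preserves the value bound is the second bullet of \Cref{lem:pruningOneStep}: advancing from $\ell$ to $\ell+1$ in sequence $i$ only happens because some later sequence $i'$ has a ball intersecting $B(x_i^\ell, 2c\cdot R_i^\ell)$, and then $\nval(B(x_i^{\ell+1}, R_i^{\ell+1})) \geq \nval(B(x_{i'}^1, R_{i'}^1)) \geq M$, the last inequality because $B(x_{i'}^1, R_{i'}^1)$ was the available ball of maximum value at the start of iteration $i'$, hence has value at least that of any ball still available at the end, in particular at least $M$. So after the first advance, the value stays $\geq M$ forever; and before any advance, $\nval(B(x_i^1, R_i^1)) \geq M$ already holds since $B(x_i^1,R_i^1)$ is a maximum-value available ball at the start of iteration $i$. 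This establishes the first bullet no matter when we stop.

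For termination and the second bullet, the crucial point is the \emph{first} bullet of \Cref{lem:pruningOneStep}: whenever $B(x_i^\ell, 2c\cdot R_i^\ell)$ meets $B(x_{i'}^{\ell'}, 2c\cdot R_{i'}^{\ell'})$ with $i < i'$, we get $R_i^\ell \geq 4c^2 \cdot R_{i'}^1 \geq 4c^2 \cdot R_{i'}^{\ell'}$, so the smaller-index ball is \emph{strictly larger} (by a factor $4c^2 > 1$) than every ball in sequence $i'$. This means: once we have fixed $\ell_i$ so that $R_i^{\ell_i}$ is small enough — specifically, once $R_i^{\ell_i} < 4c^2 \cdot R_{i'}^1$ for all $i' > i$, which must happen after finitely many advances since radii shrink geometrically and the sequences are finite (they bottom out at radius $1/(2c)^7$) — no ball of sequence $i$ at depth $\geq \ell_i$ can intersect any ball of any later sequence. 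Process $i=1$ first to its final $\ell_1$, then $i=2$, etc.; advancing a later sequence's pointer never re-creates a bad pair for an earlier, already-finalized one (that pair would have $i$ as the smaller index and its ball would need to be strictly larger, but it's already been shrunk past the threshold). Pairs with the \emph{same} radius ordering but where both indices are $> $ the one currently being processed are handled when we reach the smaller of them. One should also note the pair $(i,i')$ with $i>i'$: by symmetry of intersection this is the pair $(i',i)$ and is handled when processing $i'$. After all $k$ passes, every bad pair has been eliminated, giving the second bullet.

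The main obstacle I anticipate is \emph{bookkeeping the order of elimination correctly}: I must make sure that advancing the pointer of sequence $i$ (to kill a bad pair with some later $i'$) cannot resurrect a bad pair between sequence $i$ and an \emph{earlier} sequence $i''< i$ that was already finalized. The resolution is exactly the first bullet of \Cref{lem:pruningOneStep} applied with roles $(i'', i)$: such a pair would force $R_{i''}^{\ell} \geq 4c^2 \cdot R_i^1$ for the relevant $\ell \geq \ell_{i''}$, but since $R_i^{\ell_i} \le R_i^1$ and the threshold for finalizing $i''$ was $R_{i''}^{\ell_{i''}} < 4c^2 \cdot R_{i}^{1}$ for all later indices including $i$, no such intersection is possible — so finalized sequences stay clean. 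A secondary subtlety is that \Cref{lem:pruningOneStep} is stated for a fixed pair $(\ell,\ell')$ of depths; since we need it for \emph{all} $\ell \geq \ell_i$, $\ell' \geq \ell_{i'}$, I invoke it at each such pair separately (it is a universally quantified statement over $i,i',\ell,\ell'$), and the monotone growth of the value down a pruned sequence (again via bullet two) keeps everything $\geq M$. Modulo this ordering argument, the proof is a finite iteration whose correctness is a direct consequence of the two bullets of \Cref{lem:pruningOneStep}.
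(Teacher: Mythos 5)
Your proof is correct and takes essentially the same approach as the paper: initialize all $\ell_i = 1$, repeatedly advance the pointer of the \emph{smaller}-index sequence whenever a bad pair $(i,i')$ is found, using the first bullet of \Cref{lem:pruningOneStep} for well-definedness/termination and the second bullet (plus the greedy choice of $B(x_{i'}^1, R_{i'}^1)$) to keep every $\nval(B(x_i^{\ell_i},R_i^{\ell_i})) \geq M$. The ordering discipline (finalize $i=1$, then $i=2$, \dots) and the radius-threshold argument you add to show finalized sequences stay clean are sound but unnecessary: since advancing $\ell_i$ only shrinks the set of balls under consideration in sequence $i$, it can never create a new bad pair with any other sequence, so the paper simply repeats the advance in arbitrary order and terminates by the bounded potential $\sum_i \ell_i \leq k(\log_{2c}\Delta + 7)$.
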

\begin{proof}
    
    Initially, set $\ell_i = 1$ for all $i$. This choice ensures that the first condition is satisfied: when $B(x_i^1,r_i^1)$ is selected, it maximizes the value among all available balls. Therefore $\nval(B(x_i^1,r_i^1)) \geq M$.

To satisfy the second condition, we follow this procedure: whenever there exist $i < i'$ and $\ell \geq \ell_i$, $\ell' \geq \ell_{i'}$ such that $B(x_i^{\ell}, 2c \cdot r_i^{\ell}) \cap B(x_{i'}^{\ell'}, 2c \cdot r_{i'}^{\ell'}) \neq \emptyset$, update $\ell_i$ to $\ell + 1$. According to the first item of \Cref{lem:pruningOneStep}, this procedure is well-defined because $B(x_i^{\ell}, r_i^{\ell})$ is not the last ball in the sequence, ensuring that $B(x_i^{\ell+1}, r_i^{\ell+1})$ exists. Additionally, the second item of the lemma guarantees that $\nval(B(x_i^{\ell+1}, r_i^{\ell+1})) \geq \nval(B(x_{i'}^1, r_{i'}^{\ell+1})) \geq M$, thereby maintaining the first condition after each update.

Since one of the $\ell_i$ is incremented at each step, the procedure must eventually terminate, as the maximum length of the sequences is $\log_{2c}(\Delta) + 7$. When the procedure ends, both conditions are satisfied, thus concluding the proof.
\end{proof}

\paragraph{Defining the matching:}
Starting from the pruned sequences and \Cref{lem:cover}, we can conclude the construction of the desired matching:

\begin{lemma}\label{lem:matching}
    There exists a matching $B(\gamma,r_\gamma) \mapsto B(\phi(\gamma),r_{\phi(\gamma)})$ defined for all $\gamma \in \Gamma_1$ such that:
     \begin{enumerate}
        \item $B(\phi(\gamma),r_{\phi(\gamma)})$ is not covered by $\Gamma$.
        \item For all $\gamma'$ with $\gamma \neq \gamma'$, $B(\phi(\gamma),c\cdot r_{\phi(\gamma)}) \cap B(\phi(\gamma'),c\cdot r_{\phi(\gamma')})) = \emptyset$.
        \item $\nval((B(\gamma,r_\gamma))\leq \nval(B(\phi(\gamma),r_{\phi(\gamma)}))$.
    \end{enumerate}
\end{lemma}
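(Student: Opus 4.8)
The plan is to build the matching $\phi$ in two stages. First I would use the pruned sequences from \Cref{lem:pruning} to produce, for each of the $k$ indices $i$, a single ``good'' ball, and then I would argue that these $k$ balls can be matched bijectively to the balls $B(\gamma, R_\gamma)$ for $\gamma \in \Gamma_1$ in a way that satisfies all three properties. Fix the indices $\ell_1, \dots, \ell_k$ given by \Cref{lem:pruning}. For each $i$, look at the pruned sequence tail $B(x_i^{\ell_i}, R_i^{\ell_i}), B(x_i^{\ell_i+1}, R_i^{\ell_i+1}), \dots$ down to the last ball of radius $1/(2c)^7$. Along this tail the value can only decrease (by \Cref{lem:nvaldecreasing}, since consecutive radii differ by a factor $2c$ and the sequence descends through those radii), and it starts at a value $\geq M$. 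I want to pick, in each sequence $i$, the ball $B(x_i^{m_i}, R_i^{m_i})$ with $m_i \geq \ell_i$ that is the \emph{smallest-radius} ball still satisfying $\nval \geq M$ \emph{and} that is uncovered by $\Gamma$ — or, if covered balls obstruct this, descend one more step. The key point is the interaction with covering: if $B(x_i^{m}, R_i^{m})$ is covered, then $B(x_i^{m}, 2c\cdot R_i^m) \cap \Gamma \neq \emptyset$, which I will use to pin down a $\gamma$; if it is uncovered it is a candidate image. So the real content is to show that descending the sequence until we find an uncovered ball of value $\geq M$ either succeeds, or the sequence gets ``blocked'' by a center of $\Gamma$ in a controlled way.

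More concretely, the cleaner route is a charging/counting argument. Call a ball $B(x_i^m, R_i^m)$ with $m \geq \ell_i$ \emph{critical} if it is the first ball along sequence $i$ (starting from index $\ell_i$ and going down) that is either uncovered, or covered with $B(x_i^m, 2c\cdot R_i^m)\cap\Gamma$ nonempty and of radius below some threshold. In the uncovered case, set the image of that sequence to this critical ball; in the covered case, the witnessing center $\gamma^* \in \Gamma \cap B(x_i^m, 2c\cdot R_i^m)$ is ``responsible'' for sequence $i$. I would then show: (a) each sequence yields at least one uncovered ball of value $\geq M$ unless it is covered all the way down to the minimum radius, in which case the witnessing center $\gamma$ is so close to the final center $c_i$ that (as in the easy case / \Cref{lem:gamma0}) $\gamma = c_i \in C_k$ and moreover $B(\gamma, R_\gamma)$ cannot be available — contradicting $\gamma \in \Gamma_1$; (b) by the second item of \Cref{lem:pruning}, the uncovered critical balls from distinct sequences, taken with their $c$-dilations, are disjoint (since $B(\cdot, c\cdot R) \subseteq B(\cdot, 2c\cdot R)$ and the $2c$-dilations of all tail balls across sequences are pairwise disjoint); (c) each such critical ball has value $\geq M \geq \nval(B(\gamma, R_\gamma))$ for every $\gamma \in \Gamma_1$. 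Since there are exactly $k$ sequences and $|\Gamma_1| \leq k$, any injection from $\Gamma_1$ into the set of critical balls, one per sequence, gives the matching: property 3 holds by (c), property 2 by (b), property 1 by construction (the images are uncovered).

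The step I expect to be the main obstacle is handling the ``covered all the way down'' case and making the threshold in the definition of \emph{critical} work simultaneously with the disjointness from \Cref{lem:pruning} — i.e., showing that when a sequence's tail never produces an uncovered ball of value $\geq M$, we genuinely reach a contradiction with $\gamma \in \Gamma_1$, rather than just with some weaker statement. This requires combining \Cref{lem:distDesc} (to bound $\dist(c_i, x_i^m)$ along the tail), the closeness of the witnessing $\gamma$ to $x_i^m$ from covering, and the availability of $B(\gamma, R_\gamma)$ at the end of the algorithm, and checking the constants line up (the $100c^4$ removal radius versus the $2c$ and $20c^2$ factors that accumulate). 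A secondary subtlety is ensuring the value stays $\geq M$ along the descent: this is exactly where \Cref{lem:nvaldecreasing} is used, but one must be careful that the pruning in \Cref{lem:pruning} only ever moved $\ell_i$ \emph{forward} (to smaller balls), so the ``value $\geq M$'' guarantee at index $\ell_i$ is preserved by its proof, and then monotonicity carries it down to the critical index. Once these two points are nailed down, assembling the matching is routine.
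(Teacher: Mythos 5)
Your plan has the right high-level shape (descend each pruned sequence until you hit an uncovered ball, match those to $\Gamma_1$), but there is a genuine gap in step (c), and it is exactly the step you flagged as a ``secondary subtlety.'' You claim monotonicity ``carries the value $\geq M$ down to the critical index,'' but \Cref{lem:nvaldecreasing} says the \emph{opposite}: as the radius shrinks along a sequence, the value can only decrease. So if the critical index $m_i$ is strictly larger than $\ell_i$ (i.e., you had to descend past one or more covered balls), you have no guarantee that $\nval(B(x_i^{m_i},R_i^{m_i})) \geq M$; it may well be far below $M$. Once that fails, ``any injection from $\Gamma_1$ into the critical balls'' no longer gives property 3, because an arbitrary $\gamma\in\Gamma_1$ only satisfies $\nval(B(\gamma,R_\gamma)) \leq M$, not $\leq \nval$ of that particular (possibly small) critical ball.

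The paper circumvents this by \emph{not} matching arbitrarily. When a pruned tail contains covered balls, the image is the first ball $B(x_i^{\lambda_i},R_i^{\lambda_i})$ below which the tail is entirely uncovered, and crucially it is matched \emph{specifically} to the $\gamma$ that covers the parent $B(x_i^{\lambda_i-1},R_i^{\lambda_i-1})$. For that $\gamma$, one shows $\gamma\in N(x_i^{\lambda_i-1},10c\cdot R_i^{\lambda_i-1})$, so the algorithm \emph{could} have picked $B(\gamma,R_i^{\lambda_i})$ at that step; the greedy rule then yields $\nval(B(\gamma,R_i^{\lambda_i})) \leq \nval(B(x_i^{\lambda_i},R_i^{\lambda_i}))$, and a separate contradiction argument (if $R_\gamma > R_i^{\lambda_i}$ then $B(\gamma,R_\gamma)$ would have been removed when $c_i$ was chosen) gives $R_\gamma\leq R_i^{\lambda_i}$, so \Cref{lem:nvaldecreasing} finishes. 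The comparison is to the specific covering $\gamma$, not through the global threshold $M$. Only the sequences whose entire pruned tail is uncovered supply balls of value $\geq M$ usable for an arbitrary leftover $\gamma\in\Gamma_1$. Your ``covered all the way down'' contradiction (part (a)) and the disjointness argument (b) are essentially the paper's, but the core of property 3 needs the local greedy comparison, which your proposal lacks.
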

\begin{proof}
Let $\ell_i$ be the indices provided by \Cref{lem:pruning}. The construction of the matching proceeds in three steps:
\begin{itemize}
    \item First, note that if the last ball $B(c_i, 1/(2c)^7)$ of the $i$-th sequence is covered by an element $\gamma \in \Gamma$, then $\gamma \in \Gamma_0$ and we don't need to define the matching for $\gamma$.    
    \item Second, for any $i$ such that at least one ball in the pruned sequence $(B(x_i^{\ell}, r_i^{\ell}))_{\ell \geq \ell_i}$ is covered by $\Gamma$ but not the last one, we define $\lambda_i \geq \ell_i$ as the smallest index such that $B(x_i^{\ell}, r_i^{\ell})$ is not covered for all $\ell \geq \lambda_i$. Let $\gamma$ be an arbitrary element of $\Gamma$ that covers $B(x_i^{\lambda_i - 1}, r_i^{\lambda_i - 1})$. We then define $B(\phi(\gamma), r_{\phi(\gamma)}) = B(x_i^{\lambda_i}, r_i^{\lambda_i})$.
 
    \item Last, for any element $\gamma$ in $\Gamma_1$ that is still unmatched, we define $B(\phi(\gamma),r_{\phi(\gamma)}) = B(x_i^{\ell_i},r_i^{\ell_i})$, where $i$ is chosen arbitrarily such that none of the balls in the pruned sequence $(B(x_i^{\ell},r_i^{\ell}))_{\ell\geq \ell_i}$ are covered and such that the matching is one-to-one.
\end{itemize}

Note that the second item of Lemma \ref{lem:pruning} guarantees that if $\gamma\in \Gamma$ covers a ball of a pruned sequence, it cannot cover a ball of another pruned sequence: this ensures that our definition of the matching is consistent. We can now verify it satisfies the three desired properties.

\begin{enumerate}
\item For any $\gamma \in \Gamma_1$, $B(\phi(\gamma),r_{\phi(\gamma)})$ is not covered by $\Gamma$ by construction.
\item For any $\gamma, \gamma' \in \Gamma_1$, there exist indices $i, i'$ such that $i \neq i'$, $B(\phi(\gamma), r_{\phi(\gamma)})$ is a ball of the pruned sequence $(B(x_i^{\ell}, r_i^{\ell}))_{\ell \geq \ell_i}$, and $B(\phi(\gamma'), r_{\phi(\gamma')})$ is a ball of the pruned sequence $(B(x_{i'}^{\ell}, r_{i'}^{\ell}))_{\ell \geq \ell_{i'}}$. Therefore, the second item of \Cref{lem:pruning} ensures that $B(\phi(\gamma), 2c \cdot r_{\phi(\gamma)}) \cap B(\phi(\gamma'), 2c \cdot r_{\phi(\gamma')}) = \emptyset$.

\item Let $\gamma \in \Gamma_1$. We distinguish two cases, based on whether $B(\phi(\gamma), r_{\phi(\gamma)})$ was defined at the second or last step of the procedure.

If $ B(\phi(\gamma), r_{\phi(\gamma)}) $ is defined in the last step, then it is of the form $ B(x_i^{\ell_i}, r_i^{\ell_i}) $. By \Cref{lem:pruning}, we have $\nval(B(x_i^{\ell_i}, r_i^{\ell_i})) \geq M$. Combined with the fact that $ B(\gamma, r_\gamma) $ is available at the end of the algorithm, we directly obtain $\nval(B(\gamma, r_\gamma)) \leq \nval(B(x_i^{\ell_i}, r_i^{\ell_i}))$.

Otherwise, $ B(\gamma, r_\gamma) $ is defined in the second step, and $ B(\phi(\gamma), r_{\phi(\gamma)}) = B(x_i^{\lambda_i}, r_i^{\lambda_i}) $ for some $i$. We know that $\gamma$ covers $B(x_i^{\lambda_i-1}, r_i^{\lambda_i-1})$; therefore, $\gamma \in B(x_i^{\lambda_i-1}, 2c \cdot r_i^{\lambda_i-1}) \subseteq B(x_i^{\lambda_i-1}, 10c \cdot r_i^{\lambda_i-1}) \subseteq N(x_i^{\lambda_i-1}, 10c \cdot r_i^{\lambda_i-1})$. Hence, the algorithm could have picked $ B(\gamma, r_i^{\lambda_i}) $ instead of $ B(x_i^{\lambda_i}, r_i^{\lambda_i}) $, and therefore $\nval(B(\gamma, r_i^{\lambda_i})) \leq \nval(B(x_i^{\lambda_i}, r_i^{\lambda_i}))$.

It remains to prove that $ r_\gamma \leq r_i^{\lambda_i} $ to conclude with \Cref{lem:nvaldecreasing}.

Assume, for contradiction, that $ r_i^{\lambda_i} < r_\gamma $. Because $\gamma$ covers $ B(x_i^{\lambda_i-1}, r_i^{\lambda_i-1}) $, we know that $\dist(\gamma, x_i^{\lambda_i-1}) \leq 2c \cdot r_i^{\lambda_i-1} = 4c^2 \cdot r_i^{\lambda_i} < 4c^2 \cdot r_\gamma $. Moreover, by \Cref{lem:distDesc}, we have $\dist(x_i^{\lambda_i-1}, c_i) \leq 20c^2 \cdot r_i^{\lambda_i-1} = 40c^3 \cdot r_i^{\lambda_i} < 40c^3 \cdot r_\gamma$. Hence, using the triangle inequality, we get:
\begin{align*}
    \dist(\gamma, c_i) & \leq \dist(\gamma, x_i^{\lambda_i-1}) + \dist(x_i^{\lambda_i-1}, c_i) \\
    & < (4c^2 + 40c^3) \cdot r_\gamma \\
    & < 100c^4 \cdot r_\gamma.
\end{align*}
Therefore, $\gamma \in B(c_i, 100c^4 r_\gamma) \subseteq N(c_i, 100c^4 r_\gamma)$. Thus, $ B(\gamma, r_\gamma) $ is removed from the set of available balls after $ c_i $ is selected, contradicting the fact that $ B(\gamma, r_\gamma) $ is still selected at the end of the algorithm. This completes the proof that $ r_\gamma \leq r_i^{\lambda_i} $.

Now, applying \Cref{lem:nvaldecreasing}, we get $\nval(B(\gamma, r_\gamma)) \leq \nval(B(\gamma, r_i^{\lambda_i}))$. Combining this with the inequality $\nval(B(\gamma, r_i^{\lambda_i})) \leq \nval(B(x_i^{\lambda_i}, r_i^{\lambda_i}))$ obtained earlier, we conclude the proof.

\end{enumerate}
\end{proof}

\subsection[Putting things together: proof of Theorem 5.1]{Putting things together: proof of \Cref{thm:correctness}}

We conclude the proof of our main theorem as follows: 
\begin{proof}[Proof of \Cref{thm:correctness}]

Given the matching $\phi$ of \Cref{lem:matching}, we can conclude as follows. Summing the inequality of the third property of $\phi$ gives

\[
\sum_{\gamma \in \Gamma_1} \nval(B(\gamma, r_\gamma)) \leq \sum_{\gamma \in \Gamma_1} \nval(B(\phi(\gamma), r_{\phi(\gamma)})).
\]

Each $B(\phi(\gamma), r_{\phi(\gamma)})$ is not covered by $\Gamma$ by the first property of $\phi$. Therefore, we can apply \Cref{lem:cover} and obtain

\[
\sum_{\gamma \in \Gamma_1} \nval(B(\phi(\gamma), r_{\phi(\gamma)})) \leq \frac{3}{c^z} \cdot \sum_{\gamma \in \Gamma_1} \cost(B(\phi(\gamma), c\cdot r_{\phi(\gamma)}) \cap P, \Gamma).
\]

The second property of $\phi$ ensures that the balls in the set $\{ B(\phi(\gamma), c\cdot r_{\phi(\gamma)}) \mid \gamma \in \Gamma_1 \}$ are disjoint. Therefore,

\[
\sum_{\gamma \in \Gamma_1} \cost(B(\phi(\gamma), c\cdot r_{\phi(\gamma)}) \cap P, \Gamma) = \cost\left(\bigcup_{\gamma \in \Gamma_1} B(\phi(\gamma), c\cdot r_{\phi(\gamma)}) \cap P, \Gamma\right).
\]

Combining everything yields

\[
\sum_{\gamma \in \Gamma_1} \nval(B(\gamma, r_\gamma)) \leq \frac{3}{c^z} \cdot \cost\left(\bigcup_{\gamma \in \Gamma_1} B(\phi(\gamma), c\cdot r_{\phi(\gamma)}) \cap P, \Gamma\right).
\]
Combining this inequality with \Cref{lem:inAndout} and \Cref{lem:gamma0} finishes the proof of \Cref{thm:correctness}:
\begin{align*}
    \cost(P,C_k) &= \sum_{\gamma \in \Gamma_0}\cost(P_\gamma),C_k) + \sum_{\gamma \in \Gamma_1}\cost(Out(P_\gamma),C_k) + \sum_{\gamma \in \Gamma_1}\cost(Out(P_\gamma),C_k)\\
    &\leq \sum_{\gamma \in \Gamma_0} \cost(P_\gamma, \Gamma) + 2^{z-1}((200c^6)^z+1)\cdot \sum_{\gamma \in \Gamma_1} \cost(Out(P_\gamma),\Gamma) \\
    &\qquad +2^{z-1}\lpar(200c^6)^z + 1 \rpar \cdot 3   \cdot \sum_{\gamma \in \Gamma_1} \nval(B(\gamma,r_\gamma))\\
    &\leq \sum_{\gamma \in \Gamma_0} \cost(P_\gamma, \Gamma) + 2^{z-1}((200c^6)^z+1)\cdot \sum_{\gamma \in \Gamma_1} \cost(Out(P_\gamma),\Gamma)\\
    &\qquad +2^{z-1}\lpar(200c^6)^z + 1 \rpar \cdot 3   \cdot \frac{3}{c^z}\cdot \sum_{\gamma \in \Gamma_1} \cost(\bigcup_{\gamma \in \Gamma_1} B(\phi(\gamma), c\cdot r_{\phi(\gamma)}) \cap P, \Gamma)\\
    &\leq (1+ 2^{z-1}((200c^6)^z+1)(1+\frac{9}{c^z}))\cost(P,\Gamma).
\end{align*}

\end{proof}

\section{Reducing the diameter} 
\subsection{In Euclidean space} 
For $(k,z)$-clustering, we can assume that the aspect ratio $\Delta = \poly(n)$: \cite{reduceDiam} showed how to transform any input $P$ to reduce the diameter. Their algorithm runs in time $O\lpar nd \log \log \Delta\rpar$, which is the running-time of their algorithm to compute a $\poly(n)$-approximation. The later has been improved to $\tilde O(nd)$ by \cite{charikar2023simple}, hence we can reduce to $\Delta = \poly(n)$ in time $\tilde O(nd)$.

\subsection{In sparse graphs}
We describe a simple a preprocessing step that reduces the diameter $\Delta$ to $\poly(n)$. The algorithm can be described as follows: compute a minimum spanning tree, and consider the weight  $w_k$ of the $k$-th largest edge in it. As we will show, the optimal  $(k, z)$-clustering has cost at least $w_k^z/ n^z$ and at most $n^{z+1} w_k^z$. Because every constant‑factor approximation to the optimal cost stays not too far from this interval, it can never use an edge whose weight is way larger than $n w_k$. We therefore cap such edges at a large value that still keeps them useless for any close-to-optimal solution, while reducing the overall diameter. On the other hand, accuracy at a degree much finer than $w_k$ is irrelevant, so we round every remaining edge weight up to the nearest multiple of $w_k/n^3$. Finally, we rescale the metric so that all edge weights are integers in the range $\{1,\dots,n^6\}.$

\begin{algorithm}[H]
\caption{Reducing $\Delta$}
\label{alg:poly(n)}
\textbf{Input:} A weighted connected graph $G = (V,E,w)$ with $|E| = m$ and $|V| = n$, and a number of clusters $k \leq n-1$.\\
\textbf{Output:} A new weight function $w^*$ on $E$ with polynomial diameter.
\begin{algorithmic}[1]
\State{Compute a Minimum Spanning Tree $T$ of $G$.}
\State{Let $w_k$ be the weight of the $k$-th largest edge in $T$.}
\If{$w_k \leq n^3$}
\For{$e\in E$}
\State{Set $w^*(e) := \min(w(e), n^{6})$.}
\EndFor
\Else
\For{$e\in E$}
\State{Set $w^*(e) := \min(\lceil w(e) \cdot \frac{n^{3}}{w_k}\rceil, n^{6})$.}
\EndFor
\EndIf
\State{Output $w^*$.}
\end{algorithmic}
\end{algorithm}

\begin{lemma}
Algorithm~\ref{alg:poly(n)} can be implemented in time $\tilde{O}(m)$. The diameter $\Delta^*$ induced by $w^*$ is less than $n^7$. Moreover, for any constant $\alpha$ and for $n$ large enough, if a set $C$ of $k$ centers is an $\alpha$-approximation for $(k,z)$-clustering in $G^* = (V,E,w^*)$, then it is an $(\alpha \cdot 2^z)$- approximation for $(k,z)$-clustering in $G = (V,E,w)$. 
\end{lemma}
\begin{proof}\noindent
Using any efficient MST algorithm (e.g., Kruskal's~\cite{Kruskal}), we can compute the minimum spanning tree $T$ of $G$ in $\tilde{O}(m)$ time. To bound the new diameter $\Delta^*$, note that for each edge $e \in E$, we have $w^*(e) \leq n^6$, implying $\Delta^* \leq n^7$. We now turn to bounding the approximation ratio.

Let $\opt_G$ and $\opt_T$ denote the optimal $(k,z)$-clustering costs in $G$ and $T$, respectively. Since $T$ is a subgraph of $G$, it follows immediately that $\opt_G \;\le\; \opt_T$.

Because $T$ is a minimum spanning tree, for any edge $(u,v) \in E$ there is a path in $T$ whose edges all have weights at most $w(u,v)$. Thus, the distance between $u$ and $v$ in $T$ is at most $n \cdot w(u,v)$, and by induction on the number of edges in a path, for any $u,v\in V$ we have:
\[
  \dist_T(u,v) \;\le\; n \cdot \dist_G(u,v).
\]
It follows that $\opt_T \;\le\; n^z \cdot \opt_G$.

\noindent

Removing the $k-1$ largest edges from $T$ results in $k$ connected components. By choosing one center in each component, any vertex $u$ in $T$ is connected to its center by a path free of the removed edges, so the maximum distance between a vertex and its center is at most $n\cdot w_k$, where $w_k$ is the weight of the $k$-th largest edge in $T$. Hence, each vertex contributes at most $(n\,w_k)^z$ to the cost, and 
\[
  \textstyle
  \sum_{u\in V} \dist(u, \mathrm{center})^z \;\le\; n \cdot (n\,w_k)^z  \;\le\; n^{z+1} \cdot w_k^z.
\]
Since $\opt_G \leq \opt_T$, we conclude $\opt_G \;\le\; n^{z+1} \cdot w_k^z$.

On the other hand, forming $k$ clusters in $T$ requires using at least one of the top-$k$ heaviest edges, i.e., the solution's paths from vertices to centers must include at least one such edge. Otherwise, we would have $k+1$ connected components, contradicting the fact that there are only $k$ clusters. Consequently,
\[
  \opt_T \;\ge\; w_k^z
  \quad\Longrightarrow\quad
  \opt_G \;\ge\; \frac{w_k^z}{n^z}.
\]

\noindent

Assume we have a set of $k$ centers $C$ that provides an $\alpha$-approximation for the optimal $(k,z)$-clustering cost in $G^*$. We analyze two cases based on the value of $w_k$.

\noindent
\textbf{Case 1: $w_k \leq n^3$.} 

In this case, for any edge $e$, $w^*(e)  \le w(e)$ and thus $\opt_{G^*} \;\le\; \opt_G$. By the $\alpha$-approximation in $G^*$, 
\[
  \cost_{G^*}(C) \;\le\; \alpha \cdot \opt_{G^*} \;\le\; \alpha \cdot \opt_G.
\]
Suppose (toward a contradiction) that some edge $e$ on the path from a vertex $u$ to its center in $C$ has $w^*(e) = n^{6}$. Then 
\[
  n^{6 \cdot z} \;\le\; \cost_{G^*}(C) \;\le\; \alpha\cdot \opt_G \;\le\; \alpha \cdot n^{z+1}\, w_k^z
  \;\le\; \alpha \cdot n^{4z+1},
\]
which implies $n \le \alpha$. For sufficiently large $n$, this is a contradiction. Therefore, for all edges used in $\cost_{G^*}(C)$, we must have $w^*(e) = w(e)$. Hence 
\[
  \cost_G(C) \;=\; \cost_{G^*}(C) 
  \;\le\; \alpha \cdot \opt_{G},
\]
so $C$ is also an $\alpha$-approximation in $G$.

\noindent
\textbf{Case 2: $w_k > n^3$.} 
Now $w^*(e)\le w(e) \cdot \frac{n^{3}}{w_k} +1$ for any edge $e$. Let $C_{\opt}$ be an optimal $(k,z)$-clustering in $G$. Any vertex $u$ is at most $n$ edges away from its center in $C_{\opt}$, so
\[
  \dist_{w^*}(u, C_{\opt}) 
  \;\le\; \dist_w(u, C_{\opt}) \cdot \frac{n^{3}}{w_k} \;+\; n.
\]
Hence,

\begin{align*}
  \opt_{G^*} 
  &\;\le\; \sum_{u\in V} \dist_{w^*}(u, C_{\opt})^z\\
  &\;\le\; \sum_{u \in V} 
    \Bigl(\dist_w(u, C_{\opt})\cdot \frac{n^{3}}{w_k} + n\Bigr)^z\\
  &\;\le\;
    \sum_{u \in V} 2^{\,z-1}\!\Bigl[
      \Bigl(\dist_w(u, C_{\opt}) \cdot \frac{n^{3}}{w_k}\Bigr)^z + n^z
    \Bigr]\\
  &\;\le\; 
    2^{\,z-1} \Bigl(\opt_G \cdot (\frac{n^{3}}{w_k})^z + n^{\,z+1}\Bigr).
\end{align*}
Using the $\alpha$-approximation in $G^*$,
\[
  \cost_{G^*}(C) \;\le\; \alpha \cdot \opt_{G^*}
  \;\le\; \alpha \cdot 2^{\,z-1}\!
    \Bigl(\opt_G \cdot (\frac{n^{3}}{w_k})^z + n^{z+1}\Bigr).
\]
Suppose there is a vertex $u$ whose path to its center in $C$ uses an edge $e$ with $w^*(e) = n^{6}$. Then
\[
  n^{6 z} 
  \;\le\; \cost_{G^*}(C)
  \;\le\; \alpha \cdot 2^{\,z-1}\!
    \Bigl(\opt_G \cdot (\frac{n^{3}}{w_k})^z + n^{z+1}\Bigr).
\]
Since $\opt_G \leq n^{z+1}\, w_k^z$, we get
\[
  n^{6z} 
  \;\le\; \alpha \cdot 2^{\,z-1}\!
    \Bigl(n^{z+1}\, w_k^z \cdot (\frac{n^{3}}{w_k})^z + n^{z+1}\Bigr)
  \;\Longrightarrow\;
  n^{5z -1} 
  \;\le\; \alpha \cdot 2^{\,z-1}\!
    \Bigl(n^{3z} + 1\Bigr).
\]

which is a contradiction for large $n$. Therefore, $w^*(e) = \lceil w(e) \cdot \frac{n^{3}}{w_k}\rceil \ge  w(e) \cdot \frac{n^{3}}{w_k}$ for all edges used in $\cost_{G^*}(C)$, and so
\[
  \cost_{G}(C) 
  \;\le\; (\frac{w_k}{n^{3}})^z \cdot \cost_{G^*}(C).
\]
Combining this with our earlier bound,
\[
  \cost_{G}(C) 
  \;\le\; (\frac{w_k}{n^{3}})^z \cdot 
    \Bigl[\alpha \cdot 2^{\,z-1}\!
    \Bigl(\opt_G \cdot (\frac{n^{3}}{w_k})^z + n^{z+1}\Bigr)\Bigr]
  \;=\; \alpha \cdot 2^{\,z-1}
    \Bigl(\opt_G + (\frac{w_k}{n^{3}})^z\, n^{z+1}\Bigr).
\]
We have
\[
  (\frac{w_k}{n^{3}})^z\, n^{z+1} 
  \;\le\;  
    \Bigl(\tfrac{w_k}{n}\Bigr)^z
  \;\le\; \opt_G,
\]
Putting it all together,
\[
  \cost_{G}(C) 
  \;\le\;
  \alpha \cdot 2^{\,z-1} \bigl(\opt_G + \opt_G\, \bigr) 
   \;\le\;  \alpha \cdot 2^z\, \opt_G.
\]
Hence, $C$ is an $\alpha \cdot 2^z$-approximation of the optimal $(k,z)$-clustering cost in $G$.

\end{proof}

\end{document}